\newtheorem{theorem}{Theorem}
\newtheorem{lemma}{Lemma}
\DeclareMathOperator{\cov}{cov}
\begin{document}    
\title{Numerical analysis of an extended structural default model with mutual liabilities and jump risk} 
\author{Vadim Kaushansky\thanks{The first author gratefully acknowledges support from the Economic and Social Research Council and Bank of America Merrill Lynch}
\footnote{Mathematical Institute \& Oxford-Man Institute, University of Oxford, UK, E-mail: vadim.kaushansky@maths.ox.ac.uk},  Alexander Lipton\footnote{Massachusetts Institute of Technology, Connection Science, Cambridge, MA, USA, E-mail: alexlipt@mit.edu}, Christoph Reisinger\footnote{Mathematical Institute  \& Oxford-Man Institute, University of Oxford, UK, E-mail: christoph.reisinger@maths.ox.ac.uk}}    
\date{}
   
\maketitle 
\begin{abstract}
We consider a structural default model in an interconnected banking network as in \cite{Lipton2015}, with mutual obligations between each pair of banks. We analyse the model numerically for two banks with jumps in their asset value processes. Specifically, we develop a finite difference method for the resulting two-dimensional partial integro-differential equation, and study its stability and consistency. We then compute joint and marginal survival probabilities, as well as prices of credit default swaps (CDS), first-to-default swaps (FTD), credit and debt value adjustments (CVA and DVA). Finally, we calibrate the model to market data and assess the impact of jump risk.
\end{abstract}

\noindent
{\bf Keywords:} structural default model; mutual liabilities; jump-diffusion; finite-difference and splitting methods; calibration. \\
\medskip


\section{Introduction}

The estimation and mitigation of counterparty credit risk 
has become a pillar of financial risk management.
The impact of such risks on financial derivatives is explicitly acknowledged by a valuation adjustment.
For an exposition of the background and mathematical models we refer the reader to \cite{gregory2012counterparty}.
Although reduced-from models provide for a more direct simulation of default events and are commonly used in financial institutions, we follow here a structural approach which maps the capital structure of a bank into stochastic processes for equity and debt, and models default as the hitting of a lower barrier, as in \cite{BlackCox}.
An extensive literature review of further developments of this model is given in \cite{LiptonSepp}.

A particular concern to regulators and central banks is the impact of default of an entity on the financial system -- credit contagion. Of the various channels of such systemic risk described in \cite{hurd2016contagion}, we focus here on dependencies through asset correlation and interbank liabilities.
Specifically, we consider the extended structural default model introduced in \cite{Lipton2015}, where asset values are assumed to follow stochastic processes with correlated diffusion and jump components, and where mutual liabilities can lead to default cascades.

\cite{LiptonItkin2015} consider the model without jumps and obtain explicit expressions for several quantities of interest including the joint and marginal survival probabilities as well as CDSs and FTD prices.
They demonstrate that mutual liabilities can have a large impact on the survival probabilities of banks. Thus, a shock of one bank can cause ripples through the whole banking system. 

We focus here on the numerical computation of survival probabilities and credit products in the extended model with jumps, where closed-from expressions are no longer available.
Our work is therefore most similar to \cite{LiptonItkin2014}, who 
develop a finite difference method for the resulting partial integro-differential equation (PIDE) where the integral term results from a fairly general correlated Levy process in the jump component. 
By Strang splitting into the diffusion and jump operators, overall second order consistency in the timestep is obtained.
Hereby, the multi-dimensional diffusion operator is itself split dimensionwise using the Hundsdorfer-Verwer (HV) scheme, and the jump operator is 
treated as a pseudo-differential operator, which allows efficient evaluation of the discretised operator by an iterative procedure.
Stability of each of the steps is guaranteed under standard conditions.

Our approach is more straightforward in that we apply a modification of the HV scheme directly, where we treat the jump term in the same way as the cross-derivative term in the classical HV scheme.
For the analysis we consider infinite meshes, i.e.\ ignoring the boundaries,
such that the discrete operators are also infinite-dimensional.
In our analysis we build heavily on the results in \cite{intHoutStability} on stability of the PDE with cross-derivatives (but no integral term) and periodic boundary conditions on a finite mesh.

We show that the (unconditional) von Neumann stability of the scheme is not materially affected by the jump operator, as its contribution to the symbol of the scheme is of a lower order in the mesh size.
For concreteness, we restrict ourselves to the model with negative exponential jumps described in \cite{Lipton2015}.
This allows a simple recursive computation of the discrete jump operator and gives an explicit form of its eigenvalues. However, the analysis can in principle be extended to other jump size distributions.


A survey of splitting methods in finance is found in \cite{toivanen2015application}. These are roughly arranged in two groups: splitting by dimension (for multi-dimensional PDEs; such as in \cite{intHoutStability}), and splitting by operator type 
(for PIDEs, diffusion and jumps; such as in \cite{andersen2000jump}). 
\cite{LiptonItkin2014} perform these two splittings successively as described above.
To our knowledge, the present paper is the first to perform and analyse splitting into dimensions and jumps simultaneously.

The scheme is constructed to be second order consistent with the continuous integro-differential operator applied to smooth functions. However, the discontinuities in the data lead to empirically observed convergence of only first order in both space and time step.
To address this, we apply a spatial smoothing technique discussed in  \cite{pooley2003} for discontinuous option payoffs in the one-dimensional setting, and a change of the time variable to square-root time (see \cite{reisinger2013}), equivalent to a quadratically refined time mesh close to maturity, in order to restore second order convergence.
We emphasise that the presence of 
discontinuous initial data is essential to the nature of P(I)DE models of credit risk. Hence this approach improves on previous works in a key aspect of the numerical solution.

Similar to \cite{LiptonItkin2015}, we restrict the analysis to the two-dimensional case, but there is no fundamental problem in extending the method to multiple dimensions. However, due to the curse of dimensionality, for more than three-dimensional problems, standard finite-difference methods are computationally too expensive. The two-dimensional case already allows us to investigate various important model characteristics, such as joint and marginal survival probabilities, prices of credit derivatives, Credit and Debt Value Adjustments, and specifically the impact of mutual obligations.

In this paper, we consider both unilateral and bilateral counterparty risk as discussed in \cite{LiptonSav}. For the unilateral case, the model with two banks is considered, where one is a reference name and the other is either a protection buyer or a protection seller, while for the bilateral case, reference name, protection seller, and protection buyer are considered together, which leads to a three-dimensional problem. We give the equations in the Appendix, but do not include computations.

Moreover, we provide a calibration of the model with negative exponential jumps to market data, and for this calibrated model assess the impact of mutual obligations on survival probabilities.

The novel results of this paper are as follows:
\begin{itemize}
\item We analyze a two-dimensional structural default model with interbank liabilities and negative exponential jumps; in particular, we calibrate the model to the market and analyze the impact of jumps on joint and marginal survival probabilities; 
\item we develop a new finite-difference method to solve the multidimensional PIDE, which is second order consistent in both time and space variables; 
\item we prove the von Neumann and $l_2$ stability of the method; 
\item we demonstrate empirically that in the presence of discontinuous terminal and boundary conditions, second order of convergence can be maintained by local averaging of the data and suitable refinement of the timestep close to maturity.
\end{itemize}

The rest of the paper is organized as follows. In Section 2, we formulate the model for two banks with jumps, which is a simplified formulation of \cite{Lipton2015} for two banks only. Then, we briefly discuss how to compute various model characteristics. In Section 3 we propose a numerical scheme for a general pricing problem; we further prove its stability and consistency. In Section 4 we provide numerical results for various model characteristics computed with the numerical scheme from Section 3. In Section 5 we calibrate the model to the market, and in Section 6 we conclude.
\section{Model}

We consider 
the model in \cite{Lipton2015} 
for two banks. Assume that the banks have external assets and liabilities, $A_i$ and $L_i$ respectively, for $i = 1, 2$, and interbank mutual liabilities $L_{12}$ and $L_{21}$, where $L_{ij}$ is the amount the $i$-th bank owes to the $j$-th bank. Then, the total assets and liabilities for banks 1 and 2 are
\begin{equation}
	\begin{aligned}
		& \tilde{A}_1 = A_1 + L_{21}, \quad \tilde{L}_1 = L_1 + L_{12}, \\
		& \tilde{A}_2 = A_2 + L_{12}, \quad \tilde{L}_2 = L_2 + L_{21}.
	\end{aligned}
\end{equation}

\subsection{Dynamics of assets and liabilities}

As in \cite{Lipton2015}, we assume that the firms' asset values before default are governed by
\begin{equation}
	\label{assets_dynamics}
	\frac{d A_i}{A_i} = (\mu - \kappa_i \lambda_i(t)) \, d t + \sigma_i\, d W_i(t) + (e^{J_i} - 1) \, d N_i(t), \quad i = 1, 2,
\end{equation}
where $\mu$ is the deterministic growth rate, and, for $i=1,2$, $\sigma_i$ are the corresponding volatilities, $W_i$ are correlated standard Brownian motions,
\begin{equation}
	d W_1(t) d W_2(t) = \rho \, d t, 
\end{equation}
with correlation $\rho$, $N_i$ are Poisson processes independent of $W_i$, $\lambda_i$ are the intensities of jump arrivals, $J_i$ are random negative exponentially distributed jump sizes with probability density function
\begin{equation}
	\tilde{\omega}_i(s) = \left\{
	\begin{aligned}
		& 0, & s > 0, \\
		& \vartheta_i e^{\vartheta_i s}, & s \le 0,
	\end{aligned}
	\right.
\end{equation} 
with parameters $\vartheta_i > 0$, and $\kappa_i$ are jump compensators
\begin{equation}
	\kappa_i = \mathbb{E} [e^{J_i} - 1] = -\frac{1}{\varsigma_i + 1}.
\end{equation}
The jump processes are correlated in the spirit of \cite{MarshallOlkin}. Consider independent Poisson processes $N_{\{1\}}(t), N_{\{2\}}(t)$ and $N_{\{12\}}(t)$, with the corresponding intensities $\lambda_{\{1\}}, \lambda_{\{2\}}$ and $\lambda_{\{12\}}$. Then, we define the processes $N_1(t)$ and $N_2(t)$ as
\begin{equation}
	\begin{aligned}
		N_i(t) &= N_{\{i\}}(t) + N_{\{12\}}(t), \qquad i=1,2,\\
		\lambda_i &= \lambda_{\{i\}} + \lambda_{\{12\}},
	\end{aligned}
\end{equation} 
i.e., there are both systemic and idiosyncratic sources of jumps.

We assume that the liabilities are deterministic and have the following dynamics
\begin{equation}
	\frac{d L_i}{L_i} = \mu \, d t, \qquad \frac{d L_{ij}}{L_{ij}} = \mu \, d t,
\end{equation}
where $\mu$ is the same growth rate as defined in \eqref{assets_dynamics}. For pricing purposes, under the risk-neutral measure, we consider $\mu$ as a risk-free short rate. In the following, we take for simplicity $\mu = 0$, but the analysis would not change significantly for $\mu\neq 0$.
\subsection{Default boundaries}

Following \cite{Lipton2015}, we introduce time-dependent default boundaries $\Lambda_i(t)$. 
Bank $i$ is assumed defaulted if its asset value process crosses its default boundary, such that the default time for bank $i$ is
\begin{equation}
	\tau_i = \inf\{t |\, A_i(t) \le \Lambda_i\}, \quad i = 1, 2,
\end{equation}
and we define $\tau = \min(\tau_1, \tau_2)$.

Before any of the banks $i=1,2$ has defaulted, $t<\tau$, 
\begin{equation}
	\Lambda_i = \left\{
	\begin{aligned}
	& R_i (L_i + L_{i \bar{i}}) - L_{\bar{i} i} \equiv \Lambda_i^{<},  \quad & t < T, \\
	&  L_i + L_{i \bar{i}} - L_{\bar{i} i}  \equiv \Lambda_i^{=},  & t = T,
	\end{aligned}
	\right.
\end{equation}
where $0 \le R_i \le 1$ is the recovery rate and $\bar{i} = 3 - i$.

If the $k$-th bank defaults at intermediate time $t$, then for the surviving bank $\bar{k} = 3 -k$ the default boundary changes to
$\Lambda_{\bar{k}}(t+) = \tilde{\Lambda}_{\bar{k}}(t)$, where
\begin{equation}
	\tilde{\Lambda}_{\bar{k}} = \left\{
	\begin{aligned}
	& R_{\bar{k}} (L_{\bar{k}} + L_{\bar{k} k} - R_k L_{k \bar{k}})  \equiv \tilde{\Lambda}_k^{<},  \quad & t < T,\\
	& L_{\bar{k}} + L_{\bar{k} k} - R_k L_{k \bar{k}}   \equiv \tilde{\Lambda}_i^{=},  & t = T.
	\end{aligned}
	\right.
\end{equation}
It is clear that for $\Delta \Lambda_k(t) \equiv \Lambda_k(t+)-\Lambda_k(t)$ we have
\begin{equation}
	\Delta \Lambda_k \equiv  \tilde{\Lambda}_k - \Lambda_k = 
	\begin{cases}
		 (1 - R_{\bar{k}} R_k) L_{k \bar{k}},  &t < T, \\
		 (1 - R_k) L_{k \bar{k}},  &t = T.
	\end{cases}
\end{equation}
Thus, $\Delta  \Lambda_k > 0$ and the corresponding default boundaries move to the right.
This mechanism can therefore trigger cascades of defaults.

\subsection{Terminal conditions}
We need to specify the settlement process at time $t = T$. We shall do this in the spirit of \cite{Eisenberg}. Since at time $T$ full settlement is expected, we assume that bank $i$ will pay the fraction $\omega_i$ of its total liabilities to creditors. This implies that if $\omega_i = 1$ the bank pays all liabilities (both external and interbank) and survives. On the other hand, if $0 < \omega_i < 1$, bank $i$ defaults, and pays only a fraction of its liabilities. Thus, we can describe the terminal condition as a system of equations
\begin{equation}
	\label{term_cond}
	\min \left\{A_i(T) + \omega_{\bar{i}} L_{\bar{i}i}, L_i + L_{i \bar{i}}  \right\} = \omega_i \left(L_i + L_{i \bar{i}} \right).
\end{equation}
There is a unique vector $\omega = (\omega_1, \omega_2)^T$ such that the condition (\ref{term_cond}) is satisfied. See \cite{Lipton2015}, \cite{LiptonItkin2015} for details.

\subsection{Formulation of backward Kolmogorov equation}
For convenience, we introduce normalized dimensionless variables
\begin{equation}
	\bar{t} = \Sigma^2 t, \quad X_i = \frac{\Sigma}{\sigma_i} \ln \left(\frac{A_i}{\Lambda_i^{<}}\right), \quad \bar{\lambda}_i = \frac{\lambda_i}{\Sigma^2},
\end{equation}
where
\begin{equation*}
	\Sigma = \sqrt{\sigma_1 \sigma_2}.
\end{equation*}
Denote also
\begin{equation}
	\xi_i = -\left( \frac{\sigma_i}{2 \Sigma} + \kappa_i \bar{\lambda}_i\right), \quad \zeta_i = \frac{\Sigma}{\sigma_i}.
\end{equation}
Applying Itô's formula to $X_i$, we find its dynamics
\begin{equation}
	d X_i = \xi_i \, d \bar{t} + d W_i(\bar{t}) + \zeta_i J_i \, d N_i(\bar{t}).
\end{equation}
In the following, we omit bars for simplicity.

The default boundaries change to 
\begin{equation}
	\mu_i =
	\begin{cases}
		\mu_i^{<} = 0, & t < T, \\
		\mu_i^{=} = \frac{\Sigma}{\sigma_i} \ln \left(\frac{\Lambda_i^{=}(t)}{\Lambda_i^{<}(t)} \right), & t = T.
	\end{cases}
\end{equation}

Assume that the terminal payoff for a contract is $\psi(X_T)$. Then, the value function is given by
\begin{multline}
	V(x, t) = \mathbb{E} \left[\psi(X_T) \cdot \mathbbm{1}_{\{\tau \ge T \}}  + \int_t^T \chi(s, X_s) \cdot \mathbbm{1}_{\{\tau > s \}}  \, d s \, + \right. \\
	\left. + \phi_{1, 0}(\tau_1, X_2(\tau_1)) \cdot \mathbbm{1}_{\{\tau_1 < T \}} +  \phi_{2, 0}(\tau_2, X_1(\tau_2)) \cdot \mathbbm{1}_{\{\tau_2 < T \}} | \, X(t) = x \right],
\end{multline}
where $\chi(\tau, x)$ is the contract payment at an intermediate time $t \le s \le T$ (for example, coupon payment), and $\phi_{1, 0}(t, X_2(t))$ and  $\phi_{2, 0}(t, X_1(t))$ are the payoffs in case of intermediate default of bank 1 or 2, respectively.

Then, according to the Feynman--Kac formula, the corresponding pricing equation is the Kolmogorov backward equation
\begin{align}
		\label{kolm_1}
		\frac{\partial V}{\partial t} + \mathcal{L} V &= \chi(t, x), \\
		V(t, 0, x_2) &= \phi_{2, 0}(t, x_1), \quad V(t, x_1, x_2) \underset{x_1 \to +\infty}{\longrightarrow} \phi_{2, \infty}(t, x_2), \\
		V(t, x_1, 0) &= \phi_{1, 0}(t, x_2), \quad V(t, x_1, x_2) \underset{x_2 \to +\infty}{\longrightarrow} \phi_{1, \infty}(t, x_1), \\
		\label{kolm_2} V(T, x) &= \psi(x),
\end{align}
where Kolmogorov backward operator
\begin{multline}
	\label{kolmogorov_backward}
	\mathcal{L} f = \frac{1}{2}  f_{x_1 x_1} + \rho f_{x_1 x_2} + \frac{1}{2}  f_{x_2 x_2} +  \xi_1 f_{x_1} + \xi_2 f_{x_2} +  \lambda_{1}  \mathcal{J}_1 f + \lambda_{2}  \mathcal{J}_2 f + \lambda_{12}  \mathcal{J}_{12} f - v  f \\
	= \Delta_{\rho} f + \xi \cdot \nabla f + \mathcal{J} f - v f,
\end{multline}
where $v = \lambda_1 + \lambda_2 + \lambda_{12}$ and
\begin{align}
\mathcal{J}_1 f(x) &= \varsigma_1 \int_0^{x_1} f(x_1 - u,  x_2) e^{-\varsigma_1 u} d u, \label{j1_eq}\\
\mathcal{J}_2 f(x) &= \varsigma_2 \int_0^{x_2} f(x_1,  x_2 - u) e^{-\varsigma_2 u} d u, \label{j2_eq}\\
\mathcal{J}_{12} f(x) &=  \mathcal{J}_1 \mathcal{J}_2 f(x) =  \varsigma_1  \varsigma_2  \int_0^{x_1} \int_0^{x_2} f(x_1 - u, x_2 - v) e^{-\varsigma_1 u - \varsigma_2 v} d u d v \label{j12_eq},
\end{align}
$\varsigma_i = \sigma_i \vartheta_i / \Sigma$, and $\phi_{i, 0}, \phi_{i, \infty}$ are given.

In the following, we formulate the Kolmogorov backward equation for specific quantities.

\subsection{Joint and marginal survival probabilities}
\label{section:joint}
The joint survival probability is the probability that both banks do not default by the terminal time $T$ and given by
\begin{equation}
	Q(t, x) = \mathbb{E}[\mathbbm{1}_{\{\tau \ge T, X_1(T) \ge \mu_1^{=}, X_2(T) \ge \mu_2^{=}\}} \, | X(t) = x].
\end{equation}
Then, applying (\ref{kolm_1})--(\ref{kolm_2}) with $\psi(x) =\mathbbm{1}_{\{x_1 \ge \mu_1^{=}, x_2 \ge \mu_2^{=}\}}$ and $\chi(t, x) = 0$, we get
\begin{equation}
\label{joint_surv_prob}
\begin{aligned}
		& \frac{\partial Q}{\partial t} + \mathcal{L} Q = 0, \\
		& Q(t, x_{1}, 0) = 0, \quad Q(t, 0, x_2) = 0, \\
		& Q(T, x) = \mathbbm{1}_{\{x_1 \ge \mu_1^{=}, x_2 \ge \mu_2^{=}\}}.
\end{aligned}
\end{equation}

The marginal survival probability for the first bank is the probability that the first bank does not default by the terminal time $T$,
\begin{equation}
	q_1(t, x) = \mathbb{E}[\mathbbm{1}_{\{\tau \ge T, X_T \in D_1 \cup D_{12})\}} +  \Xi(\tau_2, X_1(\tau_2)) \cdot \mathbbm{1}_{\{\tau_2 < T \}}| \, X(t) = x],
\end{equation}
where $D_{12}$ is the set where both banks survive at the terminal time, $D_1$ is the set where only the first bank survives, and $ \Xi(\tau_2, X_1(\tau_2)) $ is the one-dimensional survival probability with the modified boundaries.

Then, applying (\ref{kolm_1})--(\ref{kolm_2}) with $\psi(x) = \mathbbm{1}_{\{x \in D_1 \cup D_{12})\}}, \chi(t, x) = 0$, we get
\begin{equation}
\begin{aligned}
		& \frac{\partial}{\partial t} q_1(t, x) + \mathcal{L} q_1(t, x) = 0, \\
		& q_1(t, 0, x_2) = 0, \quad
		q_1(t, x_1, 0) = \Xi(t, x_1) = 
		\begin{cases}
			\chi_{1,0}(t, x_1), & x_1 \ge \tilde{\mu}_1^{<}, \\
			0, & x_1 < \tilde{\mu}_1^{<},
		\end{cases} \\
		& q_1(t, \infty, x_2) = 1, \quad
		q_1(t, x_1, \infty) = 
			\chi_{1,\infty}(t, x_1), \\
		& q_1(T, x) = \mathbbm{1}_{\{x \in D_1 \cup D_{12} \}}.
\end{aligned}
\end{equation}
The function $\chi_{1, 0}(t, x_1)$ is the 1D survival probability, which solves the following boundary value problem
\begin{equation}
	\begin{aligned}
		& \frac{\partial}{\partial t} \chi_{1, 0}(t, x_1) + \mathcal{L}_1 \chi_{1, 0}(t, x_1)= 0, \\
		& \chi_{1, 0}(t, \tilde{\mu}_1^{<}) = 0, \quad \chi_{1, 0}(t, \infty) = 1, \\
		& \chi_{1, 0}(T, x_1) = \mathbbm{1}_{\{x_1 > \tilde{\mu}_1^{=}\}},
	\end{aligned}
\end{equation}
where
\begin{equation*}
	\mathcal{L}_1 f = \frac{1}{2} \frac{\partial^2}{\partial x_1^2} f + \xi_1 \frac{\partial}{\partial x_1} f + \lambda_1 \mathcal{J}_1 f - \lambda_1 f.
\end{equation*}

Accordingly, $\chi_{1, \infty}(t, x_1)$ is the 1D survival probability that solves the following boundary value problem
\begin{equation}
	\begin{aligned}
		& \frac{\partial}{\partial t} \chi_{1, \infty}(t, x_1) + \mathcal{L}_1 \chi_{1, \infty}(t, x_1)= 0, \\
		& \chi_{1, \infty}(t,  0) = 0, \quad \chi_{1, \infty}(t, \infty) = 1, \\
		& \chi_{1, \infty}(T, x_1) = \mathbbm{1}_{\{x_1 > \mu_1^{=}\}}.
	\end{aligned}
\end{equation}
We formulate the pricing problems for CDS, FTD, CVA and DVA in Appendix A.

\section{Numerical scheme}

We shall solve the PIDE \eqref{kolm_1}--\eqref{kolm_2} numerically with an Alternating Direction Implicit (ADI) method. The scheme is a modification of \cite{LiptonSepp} that is unconditionally stable and has second order of convergence in both time and space step.

In order to deal with a forward equation instead of a backward equation, we change the time variable to $\tau = T - t$, so that
\begin{equation}
	\label{pide_forward}
	\begin{aligned}
		& \frac{\partial V}{\partial \tau} = \mathcal{L} V(\tau, x_1, x_2) - \chi(\tau, x_1, x_2), \\
		& V(\tau, x_1, 0) = \phi_{0, 1}(\tau, x_1), \quad V(\tau, 0, x_2) =  \phi_{0, 2}(\tau, x_2), \\
		& V(\tau, x_1, x_2)  \underset{x_2 \to +\infty}{\longrightarrow} \phi_{\infty, 1}(\tau, x_1), \quad V(\tau, x_2, x_2)  \underset{x_1 \to +\infty}{\longrightarrow}  \phi_{\infty, 2}(\tau, x_2), \\
		& V(0, x_1, x_2) = \psi(x_1, x_2).
	\end{aligned}
\end{equation}

We consider the same grid for integral and differential part of the equation
\begin{equation}
	\begin{aligned}
		0 = x_1^0 < x_1^1 < \ldots < x_1^{m_1}, \\
		0 = x_2^0 < x_2^1 < \ldots < x_2^{m_2},
	\end{aligned}
\end{equation}
where $x_1^{m_1}$ and $x_2^{m_2}$ are large positive numbers.

The grid is non-uniform, and is chosen such that relatively many points lie near the default boundaries for better precision. We use a method similar to \cite{itkin2011jumps} to construct the grid.
\subsection{Discretization of the integral part of the PIDE}
In this section, we shall show how to deal with the integral part of the PIDE, and develop an iterative algorithm for the fast computation of the integral operator on the grid. To this end, we outline the scheme from \cite{LiptonSepp} and then give a new method.

The first approach is to deal with the integral operators directly. After the approximation of the integral, we get (\cite{LiptonSepp})
\begin{align}
	&\mathcal{J}_1 V(x_1 + h, x_2) = e^{-\varsigma_1 h} \mathcal{J}_1 V(x_1, x_2) +  \omega_0(\varsigma_1, h) V(x_1, x_2) + \omega_1(\varsigma_1, h) V(x_1 + h, x_2) + O(h^3), \label{J_1_approx}\\
	& \mathcal{J}_2 V(x_1, x_2 + h) = e^{-\varsigma_2 h} \mathcal{J}_2 V(x_1, x_2) +  \omega_0(\varsigma_2, h) V(x_1, x_2) + \omega_1(\varsigma_2, h) V(x_1, x_2 + h) + O(h^3) \label{J_2_approx},
\end{align}
where
\begin{equation*}
	\omega_0(\varsigma, h) = \frac{1 - (1 + \varsigma h) e^{-\varsigma h}}{\varsigma h}, \quad \omega_1(\varsigma, h) = \frac{-1 + \varsigma h + e^{-\varsigma h}}{\varsigma h}.
\end{equation*}

We can also approximate $\mathcal{J}_{12} V = \mathcal{J}_1 \mathcal{J}_2 V$ by applying above approximations for $\mathcal{J}_1$ and $\mathcal{J}_2$ consecutively.

Consider the grid
\begin{equation}
	\begin{aligned}
		0 = x_1^0 < x_1^1 < \ldots < x_1^{m_1}, \\
		0 = x_2^0 < x_2^1 < \ldots < x_2^{m_2},
	\end{aligned}
\end{equation}
where $x_1^{m_1}$ and $x_2^{m_2}$ are large positive numbers.\\

Then, we can write recurrence formulas for computing the integral operator on the grid. Denote $J_1^{i, j}, J_2^{i, j}$, $J_{12}^{i, j}$ the corresponding approximations of $\mathcal{J}_{1}V(x_1^i, x_2^j)$ , $ \mathcal{J}_{2}V(x_1^i, x_2^j)$, $\mathcal{J}_{12}V(x_1^i, x_2^j)$ on the grid. Applying (\ref{J_1_approx}) and (\ref{J_2_approx}) we get

\begin{align}
	&J_1^{i+1, j} = e^{-\varsigma_1 h^1_{i+1}} J_1^{i, j} + \omega_0(\varsigma_1, h^1_{i+1}) V(x_1^i, x_2^j) + \omega_1(\varsigma_1, h_{i+1}^1) V(x_1^{i+1}, x_2^{j}), \label{J_1_rec}\\
	&J_2^{i, j+1} = e^{-\varsigma_2 h^2_{j+1}} J_2^{i, j} + \omega_0(\varsigma_2, h^2_{j+1}) V(x_1^i, x_2^j) + \omega_1(\varsigma_2, h_{j+1}^2) V(x_1^{i}, x_2^{j+1}), \label{J_2_rec}
\end{align}
where $h_{i+1}^1 = x_1^{i+1} - x_1^{i}, h_{j+1}^2 = x_2^{j+1} - x_2^{j}$.

For an alternative method, we rewrite the integral operator as a differential equation
\begin{align}
		& \frac{\partial}{\partial x_1} \left(\mathcal{J}_1 V(x_1, x_2)e^{\varsigma_1 x_1} \right) =  \varsigma_1 V(x_1, x_2) e^{\varsigma_1 x_1}, \label{J1_ode}\\
		& \frac{\partial}{\partial x_2} \left(\mathcal{J}_2 V(x_1, x_2)e^{\varsigma_2 x_2} \right) =  \varsigma_2 V(x_1, x_2) e^{\varsigma_2 x_2}, \label{J2_ode}\\
		& \frac{\partial^2}{\partial x_1 \partial x_2} \left(\mathcal{J}_{12} V(x_1, x_2)e^{\varsigma_1 x_1 + \varsigma_2 x_2} \right) =  \varsigma_1  \varsigma_2 V(x_1, x_2) e^{\varsigma_1 x_1 + \varsigma_2 x_2}. \label{J12_pde}
\end{align}
Then, we apply the Adams-Moulton method of second order which gives us third order of accuracy locally (\cite{butcher2008numerical})
\begin{align}
	&J_1^{i+1, j} = e^{-\varsigma_1 h^1_{i+1}} J_1^{i, j} + \frac{1}{2} h^1_{i+1} e^{-\varsigma_1 h^1_{i+1}} \varsigma_1  V(x_1^i, x_2^j) + \frac{1}{2} h^1_{i+1} \varsigma_1 V(x_1^{i+1}, x_2^{j}), \label{J_1_rec_adams}\\
	&J_2^{i, j+1} = e^{-\varsigma_2 h^2_{j+1}} J_2^{i, j} +\frac{1}{2} h^2_{j+1} e^{-\varsigma_2 h^2_{j+1}} \varsigma_2  V(x_1^i, x_2^j) + \frac{1}{2} h^2_{j+1} \varsigma_2 V(x_1^{i}, x_2^{j+1}), \label{J_2_rec_adams}
\end{align}
where $h_{i+1}^1 = x_1^{i+1} - x_1^{i}, h_{j+1}^2 = x_2^{j+1} - x_2^{j}$, and is equivalent to the trapezoidal rule.

We can rewrite (\ref{J_1_rec_adams})--(\ref{J_2_rec_adams}) in the same notation as  (\ref{J_1_rec})--(\ref{J_2_rec}) by defining
\begin{equation*}
	\omega_0(\varsigma, h) = \frac{1}{2} h e^{-\varsigma h} \varsigma, \quad \omega_1(\varsigma, h) = \frac{1}{2} h \varsigma.
\end{equation*}
So,
\begin{align*}
	&J_1^{i+1, j} = e^{-\varsigma_1 h^1_{i+1}} J_1^{i, j} + \omega_0(\varsigma_1, h^1_{i+1}) V(x_1^i, x_2^j) + \omega_1(\varsigma_1, h_{i+1}^1) V(x_1^{i+1}, x_2^{j}),\\
	&J_2^{i, j+1} = e^{-\varsigma_2 h^2_{j+1}} J_2^{i, j} + \omega_0(\varsigma_2, h^2_{j+1}) V(x_1^i, x_2^j) + \omega_1(\varsigma_2, h_{j+1}^2) V(x_1^{i}, x_2^{j+1}). 
\end{align*}

As a result we get explicit recursive formulas for approximations of $\mathcal{J}_1 V$ and $\mathcal{J}_2 V$ that can be computed for all grid points via $O(m_1 m_2)$ operations. Both methods give the same order of accuracy. As was discussed above, in order to compute the approximation of $\mathcal{J}_{12} V$ we can apply consecutively the approximations of $\mathcal{J}_2 V$ and $\mathcal{J}_1 (\mathcal{J}_2 V)$. So, we have the two-step procedure:
\begin{equation}
	I_{12}^{i+1, j} = e^{-\varsigma_1 h^1_{i+1}} I_{12}^{i, j} + \omega_0(\varsigma_1, h^1_{i+1}) V(x_1^i, x_2^j) + \omega_1(\varsigma_1, h_{i+1}^1) V(x_1^{i+1}, x_2^{j}), \label{I_12_rec}\\
\end{equation}
and 
\begin{equation}
	J_{12}^{i, j+1} = e^{-\varsigma_2 h^2_{j+1}} J_{12}^{i, j} + \omega_0(\varsigma_2, h^2_{j+1}) I_{12}^{i, j} + \omega_1(\varsigma_2, h_{j+1}^2) I_{12}^{i, j+1}. \label{J_12_rec}\\
\end{equation}
Using this two-step procedure, we can also compute an approximation of $\mathcal{J}_{12} V$ on the grid in complexity $O(m_1 m_2)$.

 We shall subsequently analyze the stability of the second method and use it in the numerical tests. The results for the first method would be very similar.
 
For the implementation, computing and storing a matrix representation of the jump operator is not necessary, since the operator can be computed iteratively as described above, but we shall use matrix notation for the analysis. We henceforth denote $J_1, J_2$, and $J_{12}$ the matrices of the discretized jump operators. From (\ref{J_1_rec})--({\ref{J_2_rec}) we can find that the matrices $J_1$ and $J_2$ are lower-triangular with diagonal elements $w_1 = \omega_1(\varsigma_1, h_1)$ and $w_2 =  \omega_1(\varsigma_2, h_2)$. Then, $J_{12} = J_1 J_2$ is also a lower-triangular matrix with diagonal elements $w_1 w_2$. To illustrate, in Figure \ref{jump_matrices} we plot the sparsity patterns in $J_1, J_2$, and $J_{12}$.
 \begin{figure}[H]
	\begin{center}
				\subfloat[
				$J_1$.]{\includegraphics[width=0.32\textwidth,trim={2cm 0 2cm 0.5cm},clip]{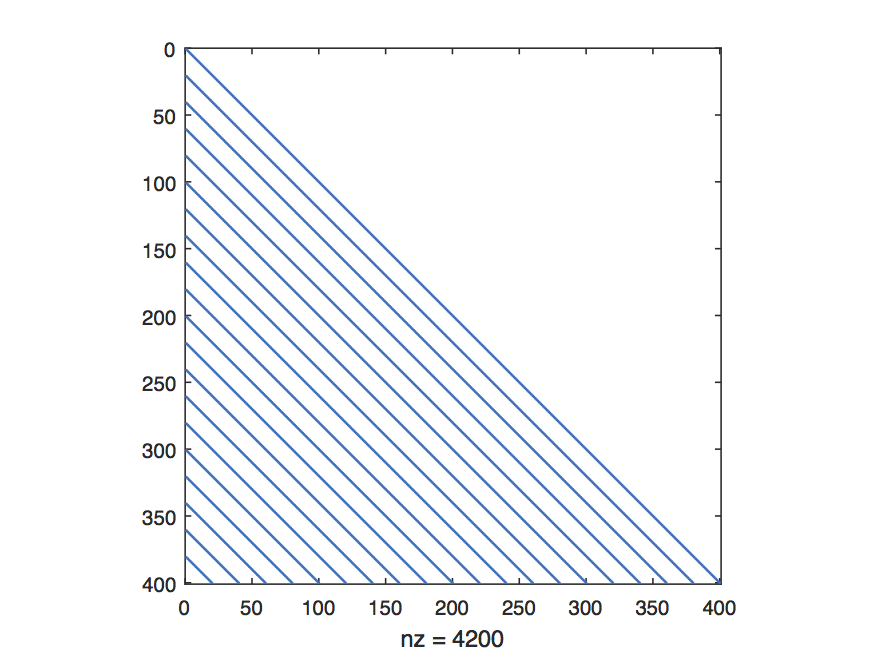}}
				\subfloat[
				$J_2$.]{\includegraphics[width=0.32\textwidth,trim={2cm 0 2cm 0.5cm},clip]{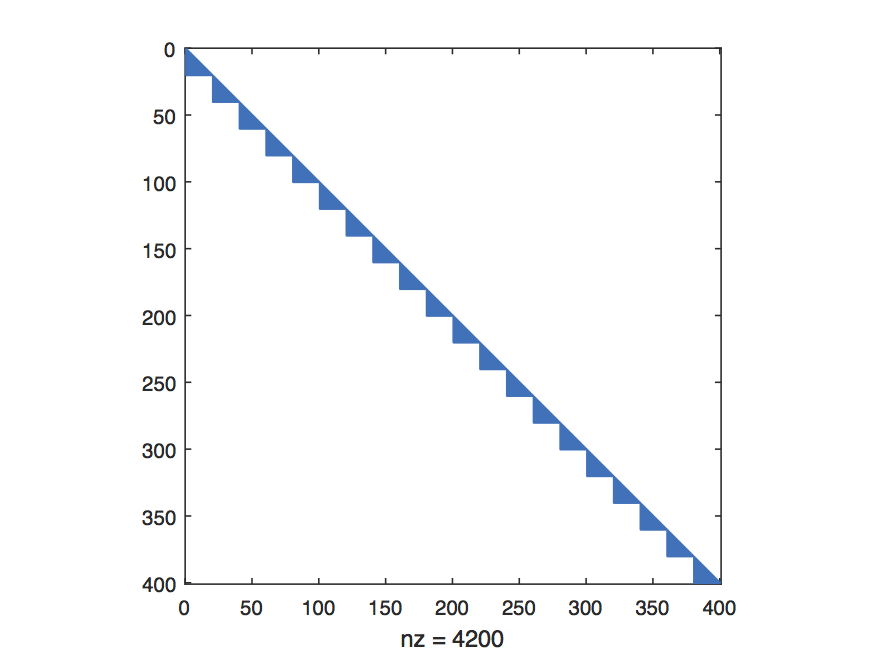}}
				\subfloat[
				$J_{12}$.]{\includegraphics[width=0.32\textwidth,trim={2cm 0 2cm 0.5cm},clip]{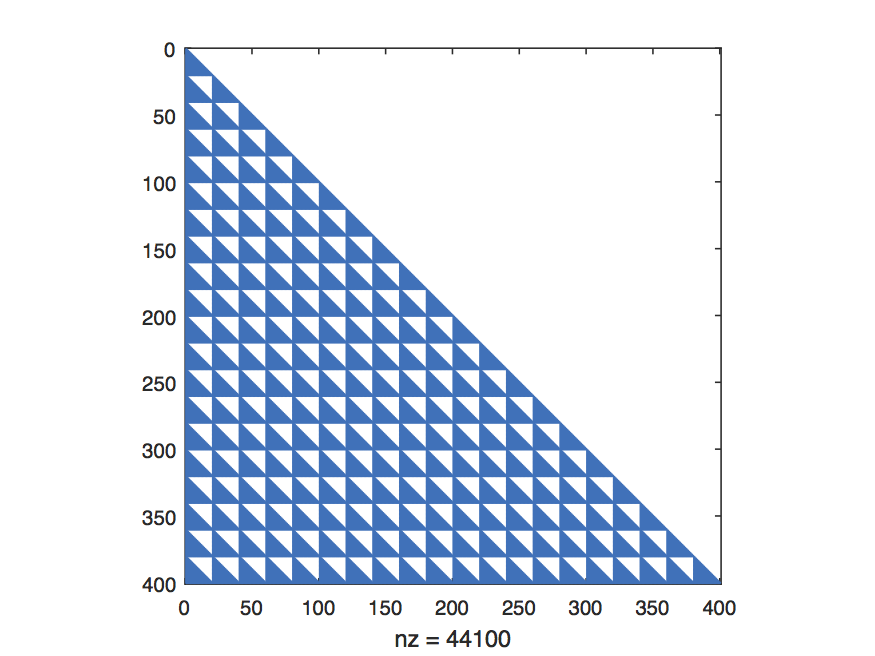}}
	\end{center}		
	\vspace{-20pt}
	\caption{Sparsity pattern of $J_1$, $J_2$, and $J_{12}$. Here, $m_1=m_2=20$ and $nz$ is the number of non-zero elements of the matrices.}
 	\label{jump_matrices}
\end{figure}

\subsection{Discretization of the differential part of the PIDE}
Now consider the approximation of derivatives in the differential operator on a non-uniform grid. We use the standard derivative approximation (\cite{kluge2002}, \cite{in2010adi}). For the first derivative over each variable consider right-sided, central, and left-sided schemes. So, for the derivative over $x_1$ we have:
\begin{align}
&	\frac{\partial V}{\partial x_1}(x_1^i, x_2^j) \approx \alpha^1_{i, -2} V(x_1^{i-2}, x_2^j) + \alpha^1_{i, -1} V(x_1^{i-1}, x_2^j)+ \alpha^1_{i, 0} V(x_1^i, x_2^j), \label{D_x1_1}\\
&	\frac{\partial V}{\partial x_1}(x_1^i, x_2^j) \approx \beta^1_{i, -1} V(x_1^{i-1}, x_2^j) + \beta^1_{i, 0} V(x_1^{i}, x_2^j)+ \beta^1_{i, 1} V(x_1^{i+1}, x_2^j), \label{D_x1_center}\\
&	\frac{\partial V}{\partial x_1}(x_1^i, x_2^j) \approx \gamma^1_{i, 0} V(x_1^{i}, x_2^j) + \gamma^1_{i, 1} V(x_1^{i+1}, x_2^j)+ \gamma^1_{i, 2} V(x_1^{i+2}, x_2^j) \label{D_x1_2},
\end{align}
while for derivative over $x_2$ we have:
\begin{align}
&	\frac{\partial V}{\partial x_2}(x_1^i, x_2^j) \approx \alpha^2_{j, -2} V(x_1^i, x_2^{j-2}) + \alpha^2_{j, -1} V(x_1^i, x_2^{j-1})+ \alpha^2_{j, 0} V(x_1^i, x_2^j), \label{D_x2_1} \\
&	\frac{\partial V}{\partial x_2}(x_1^i, x_2^j) \approx \beta^2_{j, -1} V(x_1^i, x_2^{j-1}) + \beta^2_{j, 0} V(x_1^{i}, x_2^j)+ \beta^2_{j, 1} V(x_1^j, x_2^{j+1}), \label{D_x2_center}\\
&	\frac{\partial V}{\partial x_2}(x_1^i, x_2^j) \approx \gamma^2_{j, 0} V(x_1^{i}, x_2^j) + \gamma^2_{j, 1} V(x_1^i, x_2^{j+1}, x_2^j)+ \gamma^2_{j, 2} V(x_1^i, x_2^{j+2}), \label{D_x2_2}
\end{align}
with coefficients
\begin{equation*}
	\begin{aligned}
		\alpha^k_{i, -2} &= \frac{\Delta x_k^i}{\Delta x_k^{i-1} (\Delta x_k^{i-1} + \Delta x_k^i)},  & \alpha^k_{i, -1} &= \frac{-\Delta x_k^{i-1} - \Delta x_k^i}{\Delta x_k^{i-1} \Delta x_k^i},  &\alpha^k_{i, 0} &= \frac{\Delta x_k^{i-1} + 2 \Delta x_k^i}{\Delta x_k^i (\Delta x_k^{i-1} + \Delta x_k^i)} , \\
		\beta^k_{i, -1} &= \frac{-\Delta x_k^{i+1}}{\Delta x_k^{i} (\Delta x_k^{i} + \Delta x_k^{i+1})}, & \beta^k_{i, 0} &= \frac{\Delta x_k^{i+1} - \Delta x_k^i}{\Delta x_k^{i} \Delta x_k^{i+1}},  &\beta^k_{i, 1} &= \frac{\Delta x_k^{i}}{\Delta x_k^{i+1} (\Delta x_k^{i} + \Delta x_k^{i+1})} , \\
		\gamma^k_{i, 0} &= \frac{-2\Delta x_k^{i+1} - \Delta x_k^{i+2}}{\Delta x_k^{i+1} (\Delta x_k^{i+1} + \Delta x_k^{i+2})}, & \gamma^k_{i, 1} &= \frac{\Delta x_k^{i+1} + \Delta x_k^{i+2}}{\Delta x_k^{i+1} \Delta x_k^{i+2}},  & \gamma^k_{i, 2} &= \frac{-\Delta x_k^{i+1}}{\Delta x_k^{i+2} (\Delta x_k^{i+1} + \Delta x_k^{i+2})} .
	\end{aligned}
\end{equation*}
For the boundaries at $0$ we use the schemes \eqref{D_x1_1} and \eqref{D_x2_1}, for the right boundaries at $x_1^{m_1}$ and $x_2^{m_2}$ we use the schemes \eqref{D_x1_2} and \eqref{D_x2_2}, and for other points we use the central schemes \eqref{D_x1_center} and \eqref{D_x2_center}.

To approximate the second derivative we use the central scheme:
\begin{align}
	&	\frac{\partial^2 V}{\partial x_1^2}(x_1^i, x_2^j) \approx \delta^1_{i, -1} V(x_1^{i-1}, x_2^j) + \delta^1_{i, 0} V(x_1^{i}, x_2^j)+ \delta^1_{i, 1} V(x_1^{i+1}, x_2^j), \label{D2_x1} \\
&	\frac{\partial^2 V}{\partial x_2^2}(x_1^i, x_2^j) \approx \delta^2_{j, -1} V(x_1^i, x_2^{j-1}) + \delta^2_{j, 0} V(x_1^{i}, x_2^j)+ \delta^2_{j, 1} V(x_1^j, x_2^{j+1}) \label{D2_x2},
\end{align}
with coefficients
\begin{equation*}
		\delta^k_{i, -1} = \frac{2}{\Delta x_k^{i} (\Delta x_k^{i} + \Delta x_k^{i+1})}, \quad \delta^k_{i, 0} = \frac{-2}{\Delta x_k^{i} \Delta x_k^{i+1}}, \quad \delta^k_{i, 1} = \frac{2}{\Delta x_k^{i+1} (\Delta x_k^{i} + \Delta x_k^{i+1})},
\end{equation*}	
and to approximate the second mixed derivative we use the scheme:
\begin{equation}
	\frac{\partial^2 V}{\partial x_1 \partial x_2} (x_1^i, x_2^j) \approx \sum_{k, l = -1}^1 \beta_{i, k}^1 \beta_{j, l}^2 V(x_1^{i+k}, x_2^{j+l}). \label{D_x1x2}	
\end{equation}

As a result, we can approximate the differential operator $\mathcal{D} V$ by a discrete operator
\begin{equation}
	D V = D_1 V + D_2 V + D_{12} V,
\end{equation}
where $D_1 V$ contains the discretized derivatives over $x_1$ defined in (\ref{D_x1_1})--(\ref{D_x1_2}) and (\ref{D2_x1}), $D_2 V$ contains the discretized derivatives over $x_2$ defined in (\ref{D_x2_1})--(\ref{D_x2_2}) and (\ref{D2_x2}), and $D_{12} V$ contains the discretized mixed derivative defined in (\ref{D_x1x2}).

By straightforward but lengthy Taylor expansion of the expression in (\ref{D_x1_1})--(\ref{D_x1x2}), the scheme (\ref{HV_scheme}) has  second order truncation error in variables $x_1$ and $x_2$ for meshes which are either uniform or smooth transformations of such meshes, as we shall consider later.

\subsection{Time discretization: ADI scheme}
After discretization over $(x_1, x_2)$ we can rewrite PIDE (\ref{pide_forward}) as a system of ordinary (linear) differential equations. Consider the vector $U(t) \in \mathbb{R}^{m_1m_2 \times 1}$ whose elements correspond to $V(t, x_1^i, x_2^j)$. Then
\begin{equation}
	\begin{aligned}
		& U'(t) = \tilde{A} U(t) + b(t), \\
		& U(0) = U_0,
	\end{aligned}
\end{equation}
where $\tilde{A} = D_1 + D_2 + D_{12} + \lambda_1 J_1 + \lambda_2 J_2 + \lambda_{12} J_{12} - (\lambda_1 + \lambda_2 + \lambda_{12}) I$, and $b(t)$ is determined from boundary conditions and the right-hand side.

To solve this system, we apply an ADI scheme for the time discretization. Consider, for simplicity, a uniform time mesh with time step $\Delta t: t_n = n \Delta t, n = 0, \ldots, N-1$. 

We decompose the matrix $\tilde{A}$  into three matrices, $\tilde{A} = \tilde{A}_0 + \tilde{A}_1 + \tilde{A}_2$, where
\begin{align*}
	& \tilde{A}_0 =  D_{12} + \lambda_1 J_1 + \lambda_2 J_2 + \lambda_{12} J_{12},  \\
  	& \tilde{A}_1 = D_1 - \left(\lambda_1 + \frac{\lambda_{12}}{2} \right) I, \\
	  & \tilde{A}_2 = D_2 - \left(\lambda_2 + \frac{\lambda_{12}}{2} \right) I,
\end{align*}
and $b(t) = b_0(t) + b_1(t) + b_2(t)$, where $b_0(t)$ corresponds to the right-hand side and the FD discretization of the mixed derivatives on the boundary, $b_1(t)$ and $b_2(t)$ correspond to the FD discretization of the derivatives over $x_1$ and $x_2$ on the boundary.

Now we can apply a traditional ADI scheme with matrices $\tilde{A}_0, \tilde{A}_1$, and $\tilde{A}_2$. We choose the Hundsdorfer--Verwer (HV) scheme (\cite{HV}) in order to have second order accuracy in the time variable, and unconditional stability, as we shall prove below. For convenience, denote
\begin{align}
	& F_j(t, x) = \tilde{A}_j x + b_j(t), \quad j = 0, 1, 2, \label{F_j}\\
	& F(t, x) = (\tilde{A}_0 + \tilde{A}_1 + \tilde{A}_2 ) x + (b_0(t) + b_1(t) + b_2(t)),
\end{align}
and apply the Hundsdorfer--Verwer (HV) scheme:
\begin{equation}
	\label{HV_scheme}
	\left\{
	\begin{aligned}
	&	Y_0 = U_{n-1} + \Delta t F(t_{n-1}, U_{n-1}), \\
	&	Y_j = Y_{j-1} + \theta \Delta t (F_j(t_n, Y_j) - F_j(t_n, U_{n-1})), \quad j = 1, 2, \\
	&	\tilde{Y}_0 = Y_0 + \sigma \Delta t (F(t_n, Y_2) - F(t_{n-1}, U_{n-1})), \\
	&	\tilde{Y}_j = \tilde{Y}_{j-1} + \theta \Delta t (F_j(t_n, \tilde{Y}_j - F_j(t_n, Y_2)), \quad j = 1, 2, \\
	&	U_n = \tilde{Y}_2.
	\end{aligned}
	\right.
\end{equation}

In this scheme, parts that contain $F_1$ and $F_2$ are treated implicitly. The matrix $\tilde{A}_1$ is tridiagonal and $\tilde{A}_2$ is block-tridiagonal and can be inverted via $O(m_1 m_2)$ operations.  As a result, the overall complexity is $O(m_1 m_2)$ for a single time step or $O(N m_1 m_2)$ for the whole procedure.

Moreover, the scheme has second order of consistency in both $(x_1, x_2)$ and $t$ for any given $\theta$ and $\sigma = \frac{1}{2}$. 

\subsection{Stability analysis}
In this section, we consider the PIDE \eqref{pide_forward} with zero boundary conditions at $0$ in both directions and on a uniform grid,
such that $F_j(t, x) = \tilde{A}_j x$ and
\begin{equation}
	\label{HV_nobound}
	\left\{
	\begin{aligned}
	&	Y_0 = U_{n-1} + \Delta t \tilde{A} U_{n-1}, \\
	&	Y_j = Y_{j-1} + \theta \Delta t (\tilde{A}_j Y_j - \tilde{A}_j U_{n-1}),\quad j = 1, 2, \\
	&	\tilde{Y}_0 = Y_0 + \sigma \Delta t (\tilde{A} Y_2- \tilde{A} U_{n-1}), \\
	&	\tilde{Y}_j = \tilde{Y}_{j-1} + \theta \Delta t (\tilde{A}_j\tilde{Y}_j - \tilde{A}_j Y_2), \quad j = 1, 2 \\
	&	U_n = \tilde{Y}_2.
	\end{aligned}
	\right.
\end{equation}

For convenience, we denote by $F: U_n = F U_{n-1}$.

We further consider the PDE on $\mathbb{R}^2$, i.e., without default boundaries. Hence, we assume that diffusion and jump operators are discretized on 
 an infinite, uniform mesh $\{(j_1 h_1, j_2 h_2), (j_1, j_2) \in \mathbb{Z}^2\}$, such that, e.g.\ $D_1, D_2, D_{12}, J_1, J_2$ are infinite matrices.
 This is different to \cite{intHoutStability}, where finite matrices and periodic boundary conditions (without integral terms) are considered.

We use von Neumann stability analysis, as first introduced by \cite{charney1950numerical}, by expanding the solution into a Fourier series.
Hence, we shall show that the proposed scheme (\ref{HV_nobound}) is unconditionally stable,
 i.e.\ we will show that all eigenvalues of the operator  $F$ have moduli bounded by 1 plus an $O(\Delta t)$ term,
 where the corresponding eigenfunctions are given by $\exp(i \phi_1 j_1) \exp(i \phi_2 j_2)$, with $\phi_1$ and $\phi_2$ the wave numbers and
 $j_1$ and $j_2$ the grid coordinates.
 
 \cite{intHoutStability} show that when all matrices commute
(as in the PDE case with periodic boundary conditions), 
the eigenvalues for $F$ are given by 
\begin{eqnarray}
\label{defT}
T(\tilde{z}_0, \tilde{z}_1, \tilde{z}_2) &=& 1 + 2 \frac{\tilde{z}_0 + \tilde{z}}{p} - \frac{\tilde{z}_0 + \tilde{z}}{p^2} + \sigma \frac{(\tilde{z}_0 + \tilde{z})^2}{p^2} \quad \text{with} \\
	p &=& (1 - \theta \tilde{z}_1) (1 - \theta \tilde{z}_2), \nonumber
\end{eqnarray}
where $\tilde{z}_j = \tilde{\mu}_j \Delta t$, where $\tilde{\mu}_j$ is an eigenvalue of $\tilde{A}_j$, $j = 0, 1, 2$, $\tilde{z} = \tilde{z}_1 + \tilde{z}_2$, $\theta \ge 0$.

The analysis is made slightly more complicated in our case through the presence of the  jump operators.
In the remainder of this section, 
we show that stability is still given under the same conditions on $\theta$ and $\sigma$ as in the purely diffusive case. For the correspondence of notation with \cite{intHoutStability}, we denote $A = A_0 + A_1 + A_2$, where $A_0 = D_{12}, A_1 = D_1$, $A_2 = D_2$ and $\mu_0, \mu_1$, and $\mu_2$ are the eigenvalues of the corresponding matrices. Similar to $\tilde{z}_0, \tilde{z}_1,$ and $\tilde{z}_2$, we define scaled eigenvalues $z_0 = \mu_0 \Delta t, z_1 = \mu_1 \Delta t, z_2 = \mu_2 \Delta t$.


We have the eigenvalues $\tilde{\mu}_j$ of $\tilde{A}_j$ given by
\begin{align}
	& \tilde{\mu}_0 = \mu_0 + \lambda_1 w_1 + \lambda_2 w_2 + \lambda_{12} w_{12}, \label{mu_0_eq} \\
	& \tilde{\mu}_1 = \mu_1 - \left(\lambda_1 + \frac{\lambda_{12}}{2}\right), \label{mu_1_eq}\\
	& \tilde{\mu}_2 = \mu_2 - \left(\lambda_2 + \frac{\lambda_{12}}{2}\right) \label{mu_2_eq},
\end{align}
where $\mu_j$ is  an eigenvalue of $A_j$, and $w_1, w_2$, and $w_{12}$ are eigenvalues of $J_1, J_2$, and $J_{12}$.

Denote 
$z = z_1 + z_2$, $s_1 = w_1  \Delta t, s_2 = w_2 \Delta t, s_{12} = w_{12} \Delta t$, where $w_1, w_2, w_{12}$ are eigenvalues of $J_1, J_2, J_{12}$ respectively, and $s_0 = \lambda_1 s_1 + \lambda_2 s_2 + \lambda_{12} s_{12}$.

	Multiplying (\ref{mu_0_eq})--(\ref{mu_2_eq}) by $\Delta t$, we have 
	\begin{align} 
		& \tilde{z}_0 = z_0 + s_0, \label{tilde_z0} \\
		& \tilde{z}_1 = z_1 - \left(\lambda_1 + \frac{\lambda_{12}}{2}\right) \Delta t, \\
		& \tilde{z}_2 = z_2 - \left(\lambda_2 + \frac{\lambda_{12}}{2}\right) \Delta t.  \label{tilde_z2} 		
	\end{align}
\begin{theorem}[\cite{intHoutStability}, Theorem 3.2]
	\label{theor_inthout}
	Assume $\Re({z}_1) \le 0, \Re({z}_2) \le 0$, $|{z}_0| \le 2\sqrt{\Re({z}_1) \Re({z}_2)}$, where ${z}_0, {z}_1$, and ${z}_2$ are the eigenvalues of ${A}_0, {A}_1$, and ${A}_2$, 
	and
	\begin{equation*}
		\frac{1}{2} \le \sigma \le \left(1 + \frac{\sqrt{2}}{2} \right) \theta.
	\end{equation*}
	Then,
	\begin{equation*}
		|T({z}_0, {z}_1, {z}_2)| \le 1,
	\end{equation*}
	and the Hundsdorfer--Verwer scheme \eqref{HV_nobound} is stable in the purely diffusive case.
\end{theorem}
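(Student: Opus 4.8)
The plan is to prove Theorem~\ref{theor_inthout} by a direct von~Neumann analysis of the scalar amplification factor $T$ of \eqref{defT}, since the statement concerns only this rational function for the boundaryless constant-coefficient PDE (the integral operators of our model enter only through the perturbation argument carried out afterwards). Everything thus reduces to showing $|T(z_0,z_1,z_2)|\le 1$ under the stated hypotheses. \textbf{Step 1 (clearing denominators).} I would set $w=z_0+z_1+z_2$ and $p=(1-\theta z_1)(1-\theta z_2)$. Since $\theta\ge 0$ and $\Re z_1,\Re z_2\le 0$ we have $\Re(1-\theta z_j)\ge 1$, so $p\ne 0$, $|p|\ge 1$, and $T=\bigl(p^{2}+Q(w)\bigr)/p^{2}$ with $Q(w):=\sigma w^{2}+(2p-1)w$. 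Hence $|T|\le 1$ is equivalent to $|p^{2}+Q(w)|^{2}\le |p^{2}|^{2}$, that is, to the real inequality
\begin{equation}\label{vn_reduced}
	2\,\Re\!\bigl(\overline{p^{2}}\,Q(w)\bigr)+|Q(w)|^{2}\le 0 ,
\end{equation}
which holds with equality at $w=0$; it then remains to treat $w\neq 0$.

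\textbf{Step 2 (removing the mixed-derivative eigenvalue).} Next I would fix $z_1,z_2$, so that $p$ and $z:=z_1+z_2$ are fixed and, by hypothesis, $z_0$ ranges over the closed disk $|z_0|\le\rho$ with $\rho=2\sqrt{\Re z_1\,\Re z_2}\ge 0$. Regarding the left-hand side of \eqref{vn_reduced} as a function $\Phi$ of $z_0$: the map $z_0\mapsto w$ is affine, so $z_0\mapsto Q(w)$ is holomorphic in $z_0$; therefore $2\Re(\overline{p^{2}}Q(w))$ is harmonic and $|Q(w)|^{2}$ is subharmonic, whence $\Phi$ is subharmonic in $z_0$. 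By the maximum principle $\max_{|z_0|\le\rho}\Phi=\max_{|z_0|=\rho}\Phi$, so it suffices to verify \eqref{vn_reduced} on the circle $z_0=\rho\,e^{i\alpha}$, $\alpha\in[0,2\pi)$; and if $\Re z_1=0$ or $\Re z_2=0$ the disk degenerates to $\{0\}$, which is the cross-term-free case $z_0=0$ considered first in Step~3.

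\textbf{Step 3 (the reduced inequality and conclusion).} Writing $z_1=-a_1+ib_1$, $z_2=-a_2+ib_2$ with $a_1,a_2\ge 0$ and $z_0=2\sqrt{a_1a_2}\,e^{i\alpha}$, I would then verify \eqref{vn_reduced} for all $b_1,b_2\in\mathbb R$ and $\alpha$. When the cross term vanishes ($a_1a_2=0$, so $z_0=0$), \eqref{vn_reduced} reduces to the classical von~Neumann stability of the HV scheme for two commuting, implicitly treated directions without a mixed term, which is known to hold exactly for $\tfrac12\le\sigma$ when $\theta\ge 0$, and after a reduction to real $z_1,z_2$ amounts to a one-variable rational estimate. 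For $z_0$ on the circle one expands \eqref{vn_reduced} as a polynomial in $(\cos\alpha,\sin\alpha)$ and $(b_1,b_2)$, and the coupling between the mixed-derivative eigenvalue $z_0$ and $z_1+z_2$ inside $Q(w)$ is controlled precisely by $|z_0|=2\sqrt{a_1a_2}$ together with the bound $\sigma\le\bigl(1+\tfrac{\sqrt2}{2}\bigr)\theta$, the constant $1+\tfrac{\sqrt2}{2}$ emerging as the extremal value of the ensuing minimisation over $\alpha$ and the imaginary parts. Once $|T|\le 1$ is established for every wave number, stability of \eqref{HV_nobound} follows because, in the commuting case, $F$ is diagonalised by the Fourier modes $\exp(i\phi_1 j_1)\exp(i\phi_2 j_2)$ with eigenvalues exactly the values of $T$.

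\textbf{Main obstacle.} Steps 1 and 2 are routine; the difficulty lies entirely in the multivariable algebraic/trigonometric estimate of Step~3 over the parameters $(a_1,a_2,b_1,b_2,\alpha)$, where both hypotheses $|z_0|\le 2\sqrt{\Re z_1\Re z_2}$ and $\sigma\le(1+\tfrac{\sqrt2}{2})\theta$ are genuinely used and the sharp constants are pinned down. This is the technical heart of \cite{intHoutStability}; in the present paper we simply invoke their Theorem~3.2 rather than reproduce the full case analysis.
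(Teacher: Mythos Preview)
The paper does not prove this theorem at all: it is stated as a direct citation of Theorem~3.2 in \cite{intHoutStability} and used as a black box in the subsequent proof of the jump-perturbed stability result. There is therefore nothing to compare against. You correctly acknowledge this in your final paragraph, where you note that ``in the present paper we simply invoke their Theorem~3.2 rather than reproduce the full case analysis.''

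Your sketch of Steps~1--2 (clearing denominators to obtain a real inequality, then applying a subharmonicity/maximum-principle reduction in $z_0$) is a reasonable outline of how one would begin, and is in the spirit of the original in~'t~Hout--Welfert argument. But as you yourself flag, Step~3 is where all the content lies, and your treatment there is not a proof but a narrative gesture (``one expands \ldots and the coupling \ldots is controlled precisely by \ldots''). If the intent is to supply a self-contained proof, this is a genuine gap: the sharp constant $1+\tfrac{\sqrt{2}}{2}$ does not fall out of the structure you describe without substantial further work. If the intent is merely to motivate why the cited result is plausible before invoking it, then the sketch is fine but strictly unnecessary, since the paper's approach is simply to cite and move on.
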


\begin{lemma}
The scaled eigenvalues of $A_0$, $A_1$, $A_2$, $J_1$, $J_2$, $J_{12}$ can be expressed as
\begin{eqnarray}
\label{z0}
	z_0 &=& -\rho b [\sin{\phi_1} \sin{\phi_2}],  \\
	\label{z1}
	z_1 &=& -a_1 (1 - \cos{\phi_1}) + i \xi_1 q_1 \sin{\phi_1}, \\
	\label{z2}
	z_2 &=& -a_2 (1 - \cos{\phi_2}) + i \xi_2 q_2 \sin{\phi_2}, \\
	s_1 &=& \Delta t \, \zeta_1 h_1 \left(\frac{1}{2} + \frac{\exp(-h_1 \zeta_1 + i \phi_1)}{1-\exp(-h_1 \zeta_1 + i \phi_1)} \right), \\
	s_2 &=& \Delta t \, \zeta_2 h_2 \left(\frac{1}{2} + \frac{\exp(-h_2 \zeta_2 + i \phi_2)}{1-\exp(-h_2 \zeta_2 + i \phi_2)} \right), \\
	s_{12} &=& s_1 s_2/\Delta t,
\end{eqnarray}
where
\begin{equation*}
	q_1 = \frac{\Delta t}{h_1}, \quad q_2 = \frac{\Delta t}{h_2}, \quad a_1 = \frac{\Delta t}{h_1^2}, \quad a_2 = \frac{\Delta t}{h_2^2}, \quad b= \frac{\Delta t}{h_1 h_2},
\end{equation*}
and $\phi_j \in [0, 2 \pi]$ for $j = 1, 2$.

Moreover,
\begin{equation}
\label{karelineq}
	|z_0| \le 2 \sqrt{\Re(z_1) \Re(z_2)}.
\end{equation}	
\end{lemma}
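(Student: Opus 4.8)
The plan is to compute each scaled eigenvalue directly from the discrete operator by applying it to the Fourier mode $e^{i\phi_1 j_1}e^{i\phi_2 j_2}$, and then to establish the Karel-type inequality \eqref{karelineq} by elementary estimates. First I would handle the diffusion pieces $z_1, z_2$: applying the central second-difference stencil \eqref{D2_x1} on the uniform infinite mesh to the mode $e^{i\phi_1 j_1}$ gives the factor $(e^{i\phi_1} - 2 + e^{-i\phi_1})/h_1^2 = -2(1-\cos\phi_1)/h_1^2$, and the first-difference stencil gives $i\sin\phi_1/h_1$; multiplying by $\Delta t$ and by the coefficients $1/2$ and $\xi_1$ from $\mathcal{L}_1$ produces $z_1 = -a_1(1-\cos\phi_1) + i\xi_1 q_1 \sin\phi_1$, and symmetrically for $z_2$. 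For $z_0$, the mixed-derivative stencil \eqref{D_x1x2} factorises as (central difference in $x_1$)$\times$(central difference in $x_2$), so its symbol is $(i\sin\phi_1/h_1)(i\sin\phi_2/h_2) = -\sin\phi_1\sin\phi_2/(h_1 h_2)$; multiplying by $\rho\,\Delta t$ gives $z_0 = -\rho b \sin\phi_1\sin\phi_2$.

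Next I would treat the jump operators. The key observation is that the recursion \eqref{J_1_rec_adams} on the uniform mesh (with $h^1_{i+1}\equiv h_1$, so $\zeta_1 := \varsigma_1$) says that the infinite lower-triangular matrix $J_1$ acts on $e^{i\phi_1 j_1}$ as a convolution: writing $J_1 = \sum_{k\ge 0} c_k S^{-k}$ for the shift $S$, the recursion $J_1^{i+1} = e^{-\varsigma_1 h_1} J_1^{i} + \omega_0 V^i + \omega_1 V^{i+1}$ translates, after applying to the Fourier mode, into a closed-form geometric sum. Summing the series $\sum_{k\ge 0}$ yields $w_1 = \omega_1 + \omega_0 e^{-\varsigma_1 h_1 + i\phi_1}/(1 - e^{-\varsigma_1 h_1 + i\phi_1})$; substituting $\omega_0 = \tfrac12 h_1\varsigma_1 e^{-\varsigma_1 h_1}$, $\omega_1 = \tfrac12 h_1\varsigma_1$ and multiplying by $\Delta t$ gives the stated $s_1$ (I would double-check the bookkeeping of which exponential sits in the numerator). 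The formula $s_2$ is symmetric, and $s_{12} = s_1 s_2/\Delta t$ follows from $J_{12} = J_1 J_2$ acting on a product mode, whose symbol is the product $w_1 w_2$.

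Finally, the inequality \eqref{karelineq}. We have $\Re(z_1) = -a_1(1-\cos\phi_1) \le 0$, $\Re(z_2) = -a_2(1-\cos\phi_2)\le 0$, so the right-hand side is $2\sqrt{a_1 a_2 (1-\cos\phi_1)(1-\cos\phi_2)}$. Using $|\rho|\le 1$ and $b = \Delta t/(h_1 h_2) = \sqrt{a_1 a_2}$, it suffices to show $|\sin\phi_1 \sin\phi_2| \le 2\sqrt{(1-\cos\phi_1)(1-\cos\phi_2)}$, which reduces via $\sin^2\phi = (1-\cos\phi)(1+\cos\phi)$ to $(1+\cos\phi_1)(1+\cos\phi_2)\le 4$, true since each factor lies in $[0,2]$. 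This reproduces exactly the argument used in \cite{intHoutStability} for the pure diffusion case, the point being that the coefficient structure is unchanged.

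I expect the main obstacle to be the jump-symbol computation: one must verify that the semi-infinite recursion defines a bounded convolution operator whose symbol is the convergent geometric series above (convergence needs $e^{-\varsigma_1 h_1} < 1$, i.e.\ $\varsigma_1 h_1 > 0$, which holds), and carefully track the compensator and the factor-of-$\Delta t$ conventions so that $s_1$, $s_2$, $s_{12}$ match the claimed expressions. The diffusion symbols and the final inequality are routine.
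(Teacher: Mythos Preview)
Your proposal is correct and follows essentially the same route as the paper: insert the Fourier mode $e^{i\phi_1 j_1}e^{i\phi_2 j_2}$ into each discrete operator, sum the resulting geometric series for the jump part, and verify \eqref{karelineq} from the explicit symbols. The paper's own proof is in fact more terse than yours---it writes out $(J_1 U)(j_1,j_2) = \zeta_1 h_1\bigl(\tfrac12 U(j_1,j_2) + \sum_{k\ge 1} e^{-\zeta_1 h_1 k} U(j_1-k,j_2)\bigr)$ directly (which you would obtain after unrolling the recursion and simplifying $\omega_1 e^{-\varsigma_1 h_1}+\omega_0 = \varsigma_1 h_1 e^{-\varsigma_1 h_1}$), and for \eqref{karelineq} simply invokes \cite{intHoutStability} rather than spelling out the $\sin^2\phi=(1-\cos\phi)(1+\cos\phi)$ argument you give.
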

\begin{proof}
	All six eigenvalues follow by insertion of the ansatz $U=\exp(i \phi_1 j_1) \exp(i \phi_2 j_2)$.
	For instance, 
	\[
	(J_1 U)(j_1,j_2) = \zeta_1 h_1 \left(\frac{1}{2} U(j_1,j_2) +  \sum_{k=1}^\infty \exp(-\zeta_1 h_1 k) U(j_1-k,j_2) \right),
	\]
	and the result follows by using the special form of $U$ and evaluating the geometric series.
	
	Alternatively, the first three equations follow immediately from the eigenvalues for finite matrices (\cite{intHoutStability}, p.29),
	which are given by (\ref{z0})--(\ref{z2}) where $\phi_j = 2 l \pi/m_j$, $l=1,\ldots,m_j$.
	In the infinite mesh case, the spectrum is the continuous limit and (\ref{karelineq}) still holds.
\end{proof}

\begin{theorem}
	Consider $\frac{1}{2} \le \sigma \le \left(1 + \frac{\sqrt{2}}{2} \right) \theta$. Then there exists $c>0$, independent of $\Delta t\le 1$, $h_1$ and $h_2$, such that
	\begin{enumerate}
	\item
	\begin{equation}
	|T(\tilde{z}_0, \tilde{z}_1, \tilde{z}_2)| \le 1 + c \Delta t, \qquad \forall \phi_1, \phi_2 \in [0,2\pi],
	\end{equation}
	i.e., the scheme is von Neumann stable;
	\item
	\label{part2}
	\begin{equation}
	|U_n|_2 
	\le {\rm e}^{c n \Delta t} |U_0|_2, \qquad \forall n\ge 0,
	\end{equation}
	for $|U_n|_2 = h_1 h_2 \left(\sum_{j_1,j_2=-\infty}^\infty |U_n(j_1,j_2)|^2\right)^{\scriptsize 1/2}$, i.e., the scheme is $l_2$ stable.
	\end{enumerate}
\end{theorem}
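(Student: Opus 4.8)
The plan is to obtain the pointwise bound on the amplification symbol in part~1 as a mesh-uniform $O(\Delta t)$ perturbation of the purely diffusive estimate of Theorem~\ref{theor_inthout}, and then to deduce the $l_2$ estimate in part~\ref{part2} from part~1 by Plancherel's theorem on the infinite grid. First I would note that on the infinite uniform mesh each of $D_1,D_2,D_{12},J_1,J_2,J_{12}$, hence each $\tilde A_j$, is a convolution operator, so it is diagonalised by the discrete Fourier transform with the symbols $z_j$, $s_j$ of the Lemma; since $\Re(1-\theta\tilde z_j)=1-\theta\Re(\tilde z_j)\ge 1$ (using $\Re(z_j)\le 0$ from \eqref{z1}--\eqref{z2} and the relations \eqref{tilde_z0}--\eqref{tilde_z2}, together with $\theta>0$, which is forced by $\tfrac12\le\sigma\le(1+\tfrac{\sqrt2}{2})\theta$), the matrices $I-\theta\Delta t\,\tilde A_j$ are boundedly invertible, and the same manipulation that produces \eqref{defT} shows that $F$ is a bounded convolution operator with symbol $T(\tilde z_0,\tilde z_1,\tilde z_2)$. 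Hence $\|F\|_{l_2\to l_2}=\sup_{\phi_1,\phi_2\in[0,2\pi]}|T(\tilde z_0,\tilde z_1,\tilde z_2)|$, and once part~1 is proved, $|U_n|_2=|F^nU_0|_2\le(1+c\Delta t)^n|U_0|_2\le {\rm e}^{cn\Delta t}|U_0|_2$, giving part~\ref{part2}.

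For part~1, the key observation is that Theorem~\ref{theor_inthout} cannot be applied directly to the scheme's own triple $(\tilde z_0,\tilde z_1,\tilde z_2)$, since the Kronecker-type inequality $|\tilde z_0|\le 2\sqrt{\Re(\tilde z_1)\Re(\tilde z_2)}$ may be violated by an $O(\Delta t)$ margin. Instead I would apply it to the mixed triple $(z_0,\tilde z_1,\tilde z_2)$: by \eqref{tilde_z0}--\eqref{tilde_z2}, $\Re(\tilde z_j)=\Re(z_j)-(\lambda_j+\lambda_{12}/2)\,\Delta t$ with a nonnegative shift, so $\Re(\tilde z_j)\le\Re(z_j)\le 0$ and $\Re(\tilde z_1)\Re(\tilde z_2)\ge\Re(z_1)\Re(z_2)$; combined with \eqref{karelineq} this gives $|z_0|\le 2\sqrt{\Re(z_1)\Re(z_2)}\le 2\sqrt{\Re(\tilde z_1)\Re(\tilde z_2)}$. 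Under the stated relation between $\sigma$ and $\theta$, Theorem~\ref{theor_inthout} then yields $|T(z_0,\tilde z_1,\tilde z_2)|\le 1$.

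It then remains to bound $|T(\tilde z_0,\tilde z_1,\tilde z_2)-T(z_0,\tilde z_1,\tilde z_2)|$ by $c\Delta t$. The two arguments share $p=(1-\theta\tilde z_1)(1-\theta\tilde z_2)$ and $\tilde z=\tilde z_1+\tilde z_2$ and differ only in the $0$-slot by $\tilde z_0-z_0=s_0$; expanding \eqref{defT} gives
\[
T(\tilde z_0,\tilde z_1,\tilde z_2)-T(z_0,\tilde z_1,\tilde z_2)=s_0\left(\frac{2}{p}-\frac{1}{p^2}+\sigma\,\frac{\tilde z_0+z_0+2\tilde z}{p^2}\right),
\]
and I would bound the bracket uniformly. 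From the Lemma, $|w_i|=\zeta_ih_i\bigl|\tfrac12+\tfrac{{\rm e}^{-\zeta_ih_i+i\phi_i}}{1-{\rm e}^{-\zeta_ih_i+i\phi_i}}\bigr|\le\tfrac{\zeta_ih_i}{2}\coth(\tfrac{\zeta_ih_i}{2})\le 1+\tfrac{\zeta_ih_i}{2}$, which is uniformly bounded, so $|s_i|=|w_i|\Delta t\le C\Delta t$, hence $|s_{12}|=|s_1s_2|/\Delta t\le C\Delta t$ and $|s_0|=|\lambda_1s_1+\lambda_2s_2+\lambda_{12}s_{12}|\le C\Delta t$. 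Next, $\Re(\tilde z_j)\le 0$ and $\theta>0$ give $|1-\theta\tilde z_j|^2\ge\max(1,\theta^2|\tilde z_j|^2)$, so $|p|\ge 1$ and $|\tilde z_j|/|p|^2\le 1/\theta$. Finally, \eqref{karelineq} gives $|z_0|\le -\Re(z_1)-\Re(z_2)\le|z_1|+|z_2|$, while $|z_j|\le|\tilde z_j|+(\lambda_j+\lambda_{12}/2)\,\Delta t\le|\tilde z_j|+C$ and $|\tilde z_0|\le|z_0|+|s_0|$, so $(|\tilde z_0|+|z_0|+2|\tilde z|)/|p|^2\le C$. Substituting, the bracket is bounded by a constant depending only on $\theta$, $\sigma$ and the model parameters, the displayed difference is $O(|s_0|)=O(\Delta t)$, and therefore $|T(\tilde z_0,\tilde z_1,\tilde z_2)|\le 1+c\Delta t$.

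The main obstacle is this last step: one must control the diffusion symbols $z_1,z_2$ and $\tilde z=\tilde z_1+\tilde z_2$, which are of order $\Delta t/h_i^2$ and hence unbounded as $h_i\to 0$, against the denominator $p$. The decisive facts are the quadratic growth $|1-\theta\tilde z_j|\gtrsim\theta|\tilde z_j|$ and the Kronecker-type inequality \eqref{karelineq}, which together make $|p|^2$ dominate $\tilde z_0+z_0+2\tilde z$ uniformly in the mesh, so that the jump-induced shift $s_0$ (which is $O(\Delta t)$ but not small relative to $z_0$) enters only linearly and harmlessly. A smaller but essential subtlety, easy to overlook, is that Theorem~\ref{theor_inthout} must be invoked at $(z_0,\tilde z_1,\tilde z_2)$ rather than at the scheme's actual symbol $(\tilde z_0,\tilde z_1,\tilde z_2)$.
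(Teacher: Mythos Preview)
Your proposal is correct and follows essentially the same route as the paper: apply Theorem~\ref{theor_inthout} at the mixed triple $(z_0,\tilde z_1,\tilde z_2)$ (which is the crucial trick you correctly identify), then bound the perturbation $T(\tilde z_0,\tilde z_1,\tilde z_2)-T(z_0,\tilde z_1,\tilde z_2)$ by $c\Delta t$ using $|s_0|\le c_0\Delta t$, $|p|\ge 1$, and a uniform bound on $(z_0+\tilde z)/p$, and finally pass to $l_2$ via Parseval. Your argument is in fact somewhat more explicit than the paper's---the $\coth$ bound on $|w_i|$ and the case split giving $|\tilde z_j|/|p|^2\le 1/\theta$ spell out steps the paper summarises as ``a simple calculation''---but the structure and all key estimates coincide.
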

\begin{proof}


First, we have that
\begin{eqnarray*}
|T({z}_0, \tilde{z}_1, \tilde{z}_2)| = \left|1 + 2 \frac{{z}_0 + \tilde{z}}{p} - \frac{{z}_0 + \tilde{z}}{p^2} + \sigma \frac{({z}_0 + \tilde{z})^2}{p^2}\right|
\le 1,
\end{eqnarray*}
where as before $p = (1 - \theta \tilde{z}_1) (1 - \theta \tilde{z}_2)$ and $\tilde{z} = \tilde{z}_1 + \tilde{z}_2$. 
This follows from Theorem \ref{theor_inthout} because $\lambda_1$, $\lambda_2$ and $\lambda_{12}$ are positive and therefore (\ref{karelineq}) is still satisfied with $z_1$ and $z_2$ replaced by $\tilde{z}_1$ and $\tilde{z}_2$.

We have
\begin{eqnarray*}
T(\tilde{z}_0, \tilde{z}_1, \tilde{z}_2) &=&
T({z}_0, \tilde{z}_1, \tilde{z}_2) +
2 \frac{s_0}{p} - \frac{s_0}{p^2} + \sigma \frac{2 s_0 (z_0 + \tilde{z}) + s_0^2}{p^2}.
\end{eqnarray*}
A simple calculation shows that $|s_0|\le c_0 \, \Delta t$ for a constant $c_0$ (independent of $\Delta t, h_1, h_2, \phi_1$, $\phi_2$;
indeed, $c_0=2 \lambda_1 + 2 \lambda_2 + 4 \lambda_{12}$
works for small enough $h_1$, $h_2$).
Therefore, and because $|p|\ge 1$, $|z_0 + \tilde{z}|/|p|\le c_1$ for a constant $c_1$,
\[
\left|2 \frac{s_0}{p} - \frac{s_0}{p^2} + \sigma \frac{2 s_0 (z_0 + \tilde{z}) + s_0^2}{p^2} \right|
\le c \Delta t,
\]
for any $c\ge (3 + 2 \sigma c_1 + c_0 \sigma) c_0$.
From this the first statement follows.

We can now deduce part \ref{part2} by a standard argument. For the discrete-continuous Fourier transform
\[
l_2(\mathbb{Z}^2) \rightarrow L_2(-\pi,\pi)^2, \qquad U \rightarrow \widehat{U}, \qquad \widehat{U}(\phi_1,\phi_2) = h_1 h_2 \sum_{j,k \in \mathbb{Z}} U(j,k) {\rm e}^{-i (j \phi_1 + k \phi_2)},
\]
we have
\[
\widehat{U}_{n+1}(\phi_1,\phi_2) = T(\tilde{z}_0, \tilde{z}_1, \tilde{z}_2) \, \widehat{U}_{n}(\phi_1,\phi_2), \qquad \forall n\ge 0.
\]
Then, by Parseval,
\begin{eqnarray*}
|U_n|_2^2 &=& \frac{1}{4 \pi^2} |\widehat{U}_n|^2  \\
&=& \frac{1}{4 \pi^2} 
\frac{1}{h_1^2 h_2^2} \int_{-\pi}^\pi |\widehat{U}_n(\phi_1,\phi_2)|^2 
\, {\rm d} \phi_1 \, {\rm d} \phi_2 \\
&\le& \frac{1}{4 \pi^2} 
\frac{1}{h_1^2 h_2^2} \int_{-\pi}^\pi (1+c \Delta t)^{2n} |\widehat{U}_0(\phi_1,\phi_2)|^2 
\, {\rm d} \phi_1 \, {\rm d} \phi_2 \\
&\le& {\rm e}^{2 c n \Delta t}  \frac{1}{4 \pi^2} 
\frac{1}{h_1^2 h_2^2} \int_{-\pi}^\pi |\widehat{U}_0(\phi_1,\phi_2)|^2 
\, {\rm d} \phi_1 \, {\rm d} \phi_2 \\ 
&=& {\rm e}^{2 c n \Delta t}  |U_0|_2^2.
\end{eqnarray*}
\end{proof}

This ($l_2$-)stability result together with second order consistency implies ($l_2$-)convergence of second order for all solutions which
are sufficiently smooth that the truncation error is defined and bounded. In our setting, where the initial condition is discontinuous, this is not given. Since the step function lies in the ($l_2$-)closure of smooth functions, convergence is guaranteed, but usually not of second order. We show this empirically in the next section and demonstrate how second order convergence can be restored practically.

\subsection{Discontinuous boundary and terminal conditions}

It is well documented (see, e.g.\ \cite{pooley2003}) that the spatial convergence order of central finite difference schemes is generally reduced to one for discontinuous payoffs. Moreover, the time convergence order of the Crank-Nicolson scheme is reduced to one due to the lack of damping of high-frequency components of the error, and this behaviour is inherited by the HV scheme. We address these two issues in the following way.

First, we smooth the terminal condition by the method of local averaging from \cite{pooley2003}, i.e., instead of using nodal values of $\phi$ directly, we use the approximation
\[
\phi(x_1^i,x_2^j) \approx \frac{1}{h_1 h_2} \int_{x_2^i-h_2/2}^{x_2^i+h_2/2} \int_{x_1^j-h_1/2}^{x_1^j+h_1/2}
\phi(\xi_1,\xi_2) \, d\xi_1 d\xi_2.
\]
For step functions with values of 0 and 1, this procedure attaches to each node the fraction of the area where the payoff is 1, in a cell of of size $h
_1 \times h_2$ centred at this point.

We illustrate the convergence improvement on the example of joint survival probabilities. Other quantities show a similar behaviour.
The model parameters in the following tests are the same as in the next section, specifically Table \ref{table:params}.

We choose $\sigma = \frac{1}{2}$ and $\theta = \frac{3}{4}$ in the HV scheme. 

The observed convergence with and without this smoothing procedure is shown in Figure \ref{fig_conv1}.
We choose the $l_2$-norm for its closeness to the stability analysis -- in the periodic case, Fourier analysis gives convergence results in $l_2$ -- and the  $l_\infty$-norm for its relevance to the problem at hand, where we are interested in the solution pointwise.
The behaviour in the $l_1$-norm is very similar.

Hereby, for a method of order $p\ge 1$ we estimate the error by extrapolation as
\begin{equation*}
|Q^{nX}(x_1, x_2) - Q(x_1, x_2)| \approx \frac{1}{2^p-1}  |Q^{nX}(x_1, x_2) - Q^{nX/2}(x_1, x_2)|,
\end{equation*}
where $Q$ is the exact solution, $Q^{nX}$ the solution with $nX$ mesh points,
and the norms are computed by either taking the maximum over mesh points or numerical quadrature.
Here, $nT=1000$ is fixed.

\begin{figure}[H]
	\begin{center}
				\subfloat[$l_2$-norm.]{\includegraphics[width=0.49\textwidth]{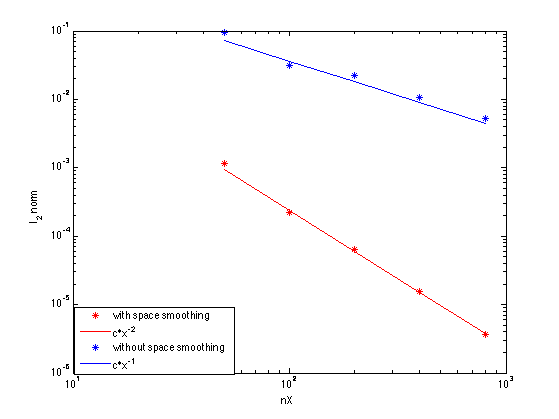}} \hfill
				\subfloat[$l_{\infty}$-norm.]{\includegraphics[width=0.49\textwidth]{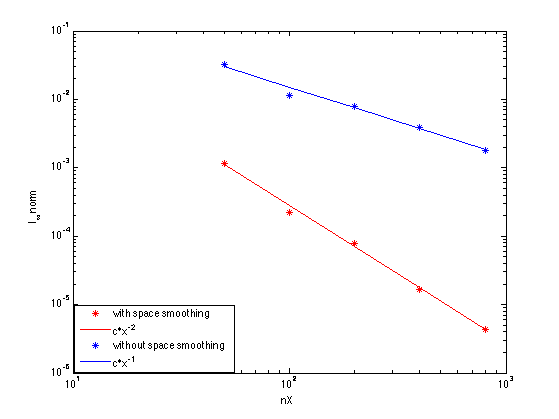}}\\
	\end{center}		
	\vspace{-20pt}
	\caption{Convergence analysis for $l_2$- and $l_{\infty}$-norms of the error depending on the mesh size with fixed time-step.}
 	\label{fig_conv1}
\end{figure}

The convergence is clearly of first order without averaging and of second order with averaging.

Second, we modify the scheme using the idea from \cite{reisinger2013} by changing the time variable $\tilde{t} = \sqrt{t}$. This change of variables 
leads to the new PDE
\[
		\frac{\partial V}{\partial \tilde{t}} + 2 \tilde{t} \mathcal{L} V = 2 \tau \chi(\tilde{t}^2, x),
\]
instead of (\ref{kolm_1}), to which we apply the numerical scheme. 

In Figure \ref{fig_conv2}, we show the convergence with and without time change, estimating the errors in a similar way to above, with $nX=800$ fixed.
 \begin{figure}[H]
	\begin{center}
				\subfloat[$l_2$-norm.]{\includegraphics[width=0.49\textwidth]{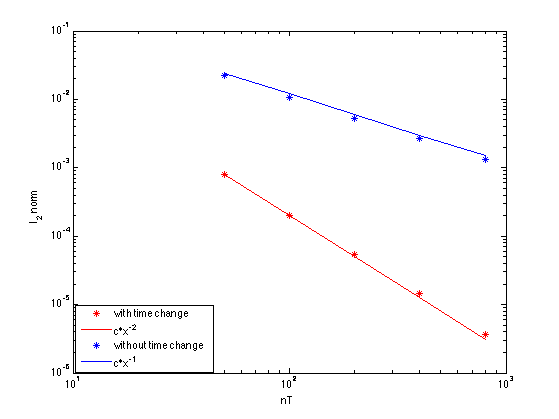}} \hfill
				\subfloat[$l_{\infty}$-norm.]{\includegraphics[width=0.49\textwidth]{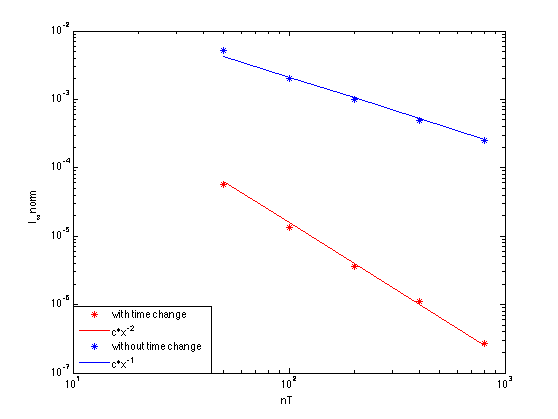}}\\

	\end{center}		
	\vspace{-20pt}
	\caption{Convergence analysis for $l_2$- and $l_{\infty}$-norms of the error depending on time-step with fixed mesh size.}
 	\label{fig_conv2}
\end{figure}

The convergence is clearly of first order without time change and of second order with time change. We took here $T=5$ to illustrate the effect more clearly.

%

\section{Numerical experiments}
\label{numerical_experiments}
In this section, we analyze the model characteristics and the impact of jumps. Specifically, we compute joint and marginal survival probabilities, CDS and FTD spreads as well as CVA and DVA depending on initial asset values. We also compute the difference between the solution with and without jumps.

Consider the parameters in Table \ref{table:params}.
\begin{table}[H]
	\begin{center}
		\begin{tabular}{| c | c | c | c | c | c | c | c | c | c | c | c |}
			\hline
			$L_{1,0}$ & $L_{2, 0}$ & $L_{12, 0}$ & $L_{21, 0}$ & $R_1$ & $R_2$ & $T$ & $\sigma_1$ & $\sigma_2$ & $\rho$ & $\varsigma_1$ & $\varsigma_2$ \\ 
			\hline
			60 & 70 & 10 & 15 & 0.4 & 0.45 & 1 & 1 & 1  & 0.5 & 1 & 1 \\
			\hline
		\end{tabular}
	\caption{Model parameters.\label{table:params}}		
	\end{center}
\end{table}
For the model with jumps, we further consider the parameters in Table \ref{table:jumps}.
\begin{table}[H]
	\begin{center}
		\begin{tabular}{| c | c | c | }
			\hline
			 $\lambda_1$& $\lambda_2$ & $\lambda_{12}$ \\
			\hline
			0.5 & 0.5 & 0.3 \\
			\hline
		\end{tabular}
		\caption{Jump intensities.\label{table:jumps}}
	\end{center}
\end{table}

We compute all tests using a $100\times100$ spatial grid with the maximum values $X_1^{100} = X_2^{100} = 10$ and constant time step $\Delta \tau = 0.01$. As the parameters of the HV scheme, we choose $\sigma = \frac{1}{2}$ and $\theta = \frac{3}{4}$. 

In Figures \ref{jointSurvProb1}--\ref{CVA1} we present various model characteristics and compare the results with and without jumps. From these figures, we can observe that jumps can have a significant impact, especially near the default boundaries:

\begin{itemize}
\item
in Figure \ref{jointSurvProb1} for the joint survival probability,
the biggest impact of jumps is around the default boundaries for both $x_1$ and $x_2$;
\item
in Figure \ref{marginalSurvProb1} for the marginal survival probability of the first bank,
we can observe that the biggest impact of jumps is near the default boundary of the first bank;
\item
for the CDS spread, in Figure \ref{CDSPrice1}, (b), the biggest impact of jumps is also seen near the default boundary, but it has the opposite direction, because jumps can only increase the CDS spread;
\item
in Figure \ref{FTDPrice1}, (d) for FTD the spread, the biggest impact of jumps is near both default boundaries, and it has a positive impact;
\item
finally, for CVA, (f), the highest impact of jumps is near the default boundary of the first bank, see Figure \ref{CVA1}.
\end{itemize}

\begin{figure}
	\begin{center}
		\subfloat[]{\includegraphics[width=0.5\textwidth]{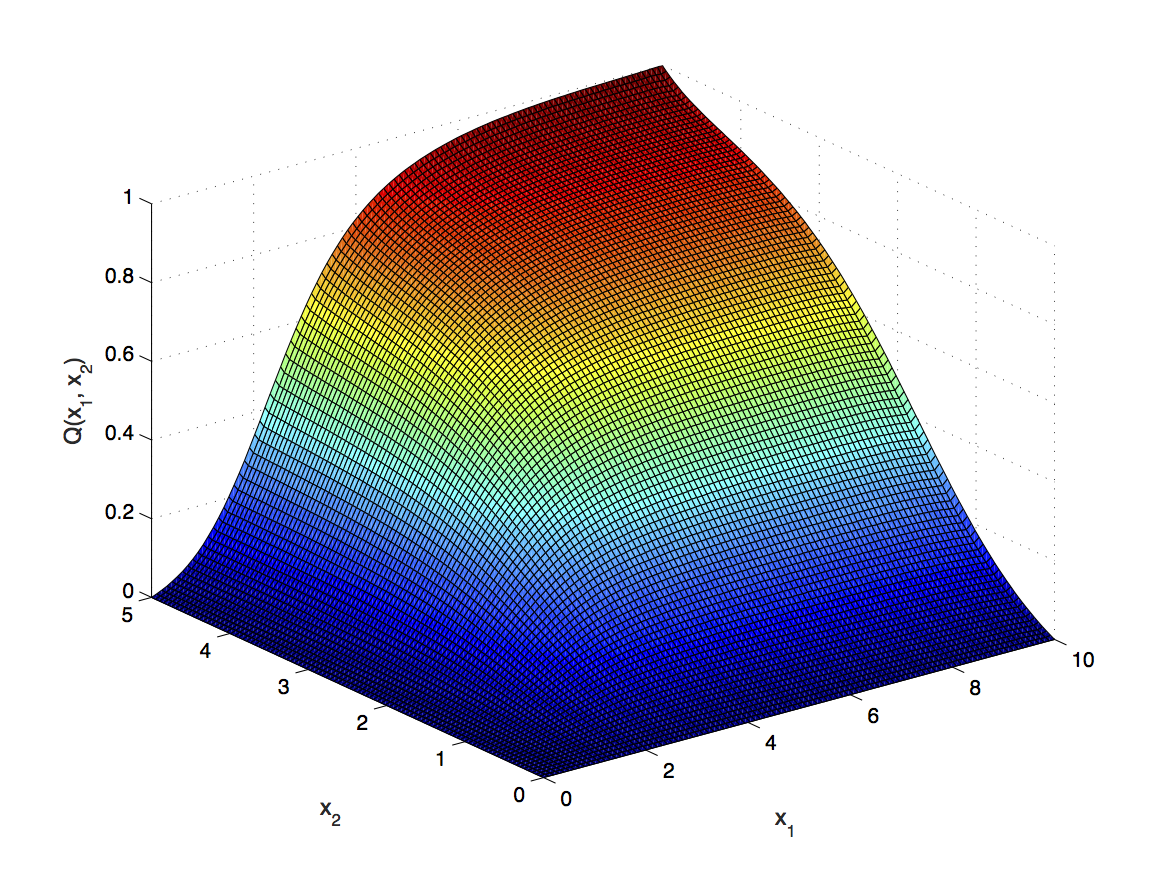}}
		\subfloat[]{\includegraphics[width=0.5\textwidth]{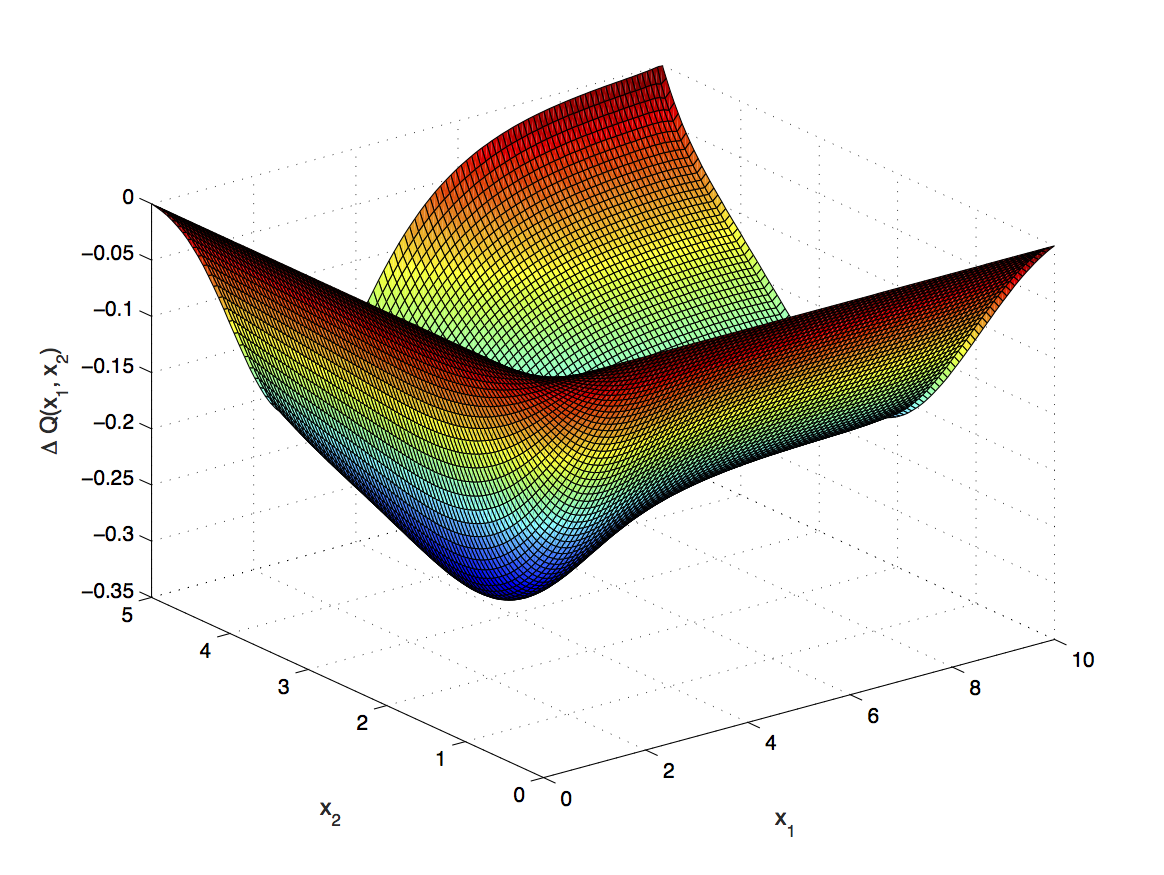}}\\
	\end{center}
	\vspace{-20pt}
	\caption{The joint survival probability: (a) value, (b) difference between model with and without jumps.}
 	\label{jointSurvProb1}
\end{figure}

 \begin{figure}
	\begin{center}
		\subfloat[]{\includegraphics[width=0.5\textwidth]{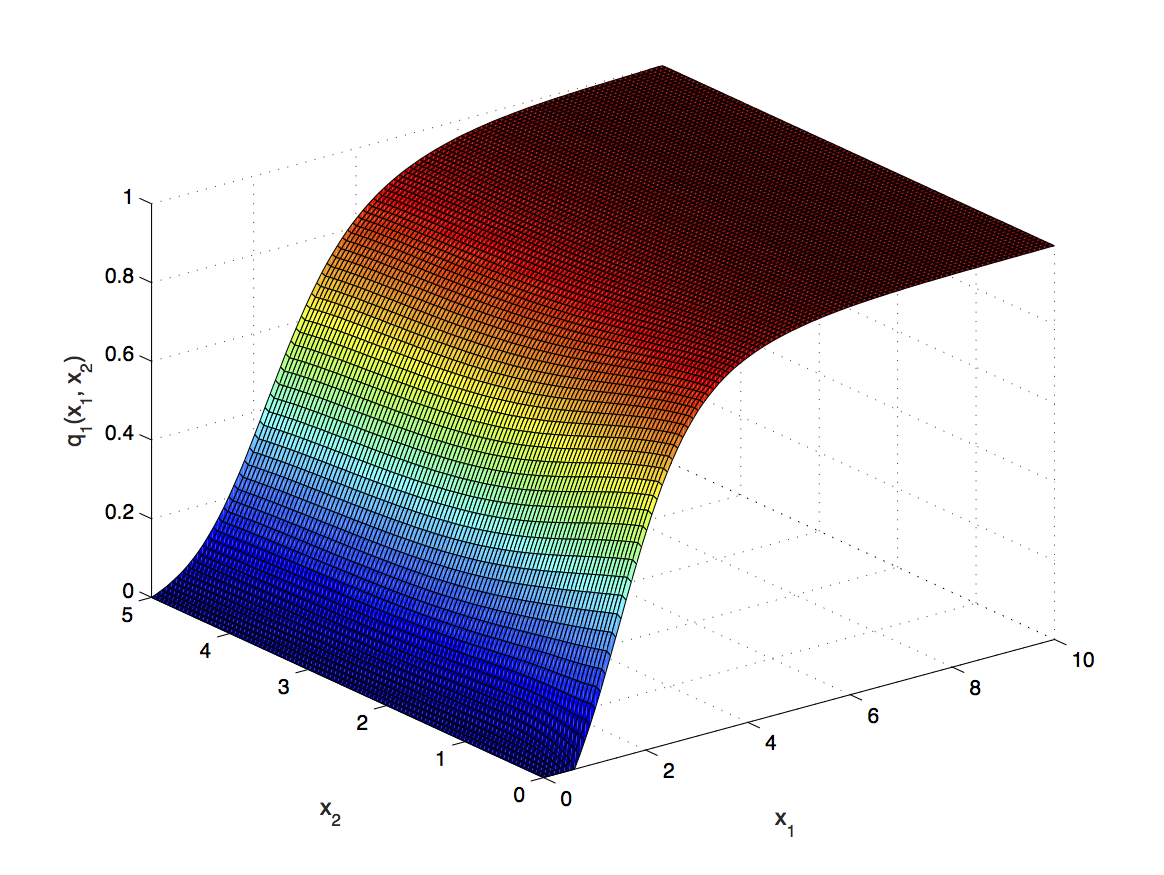}}
		\subfloat[]{\includegraphics[width=0.5\textwidth]{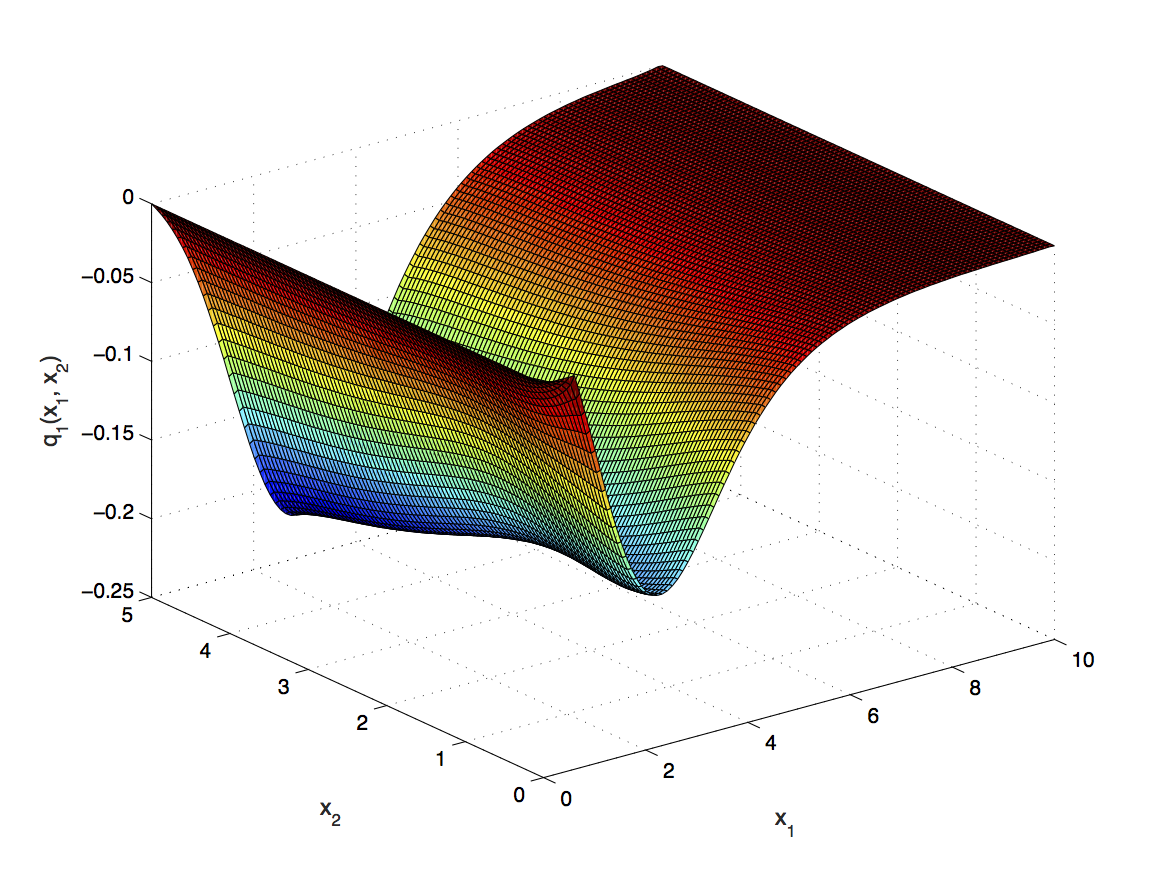}}\\
	\end{center}
	\vspace{-20pt}
	\caption{The marginal survival probability: (a) value, (b)  difference between model with and without jumps.}
 	\label{marginalSurvProb1}
\end{figure}

 \begin{figure}
	\begin{center}
		\subfloat[]{\includegraphics[width=0.49\textwidth]{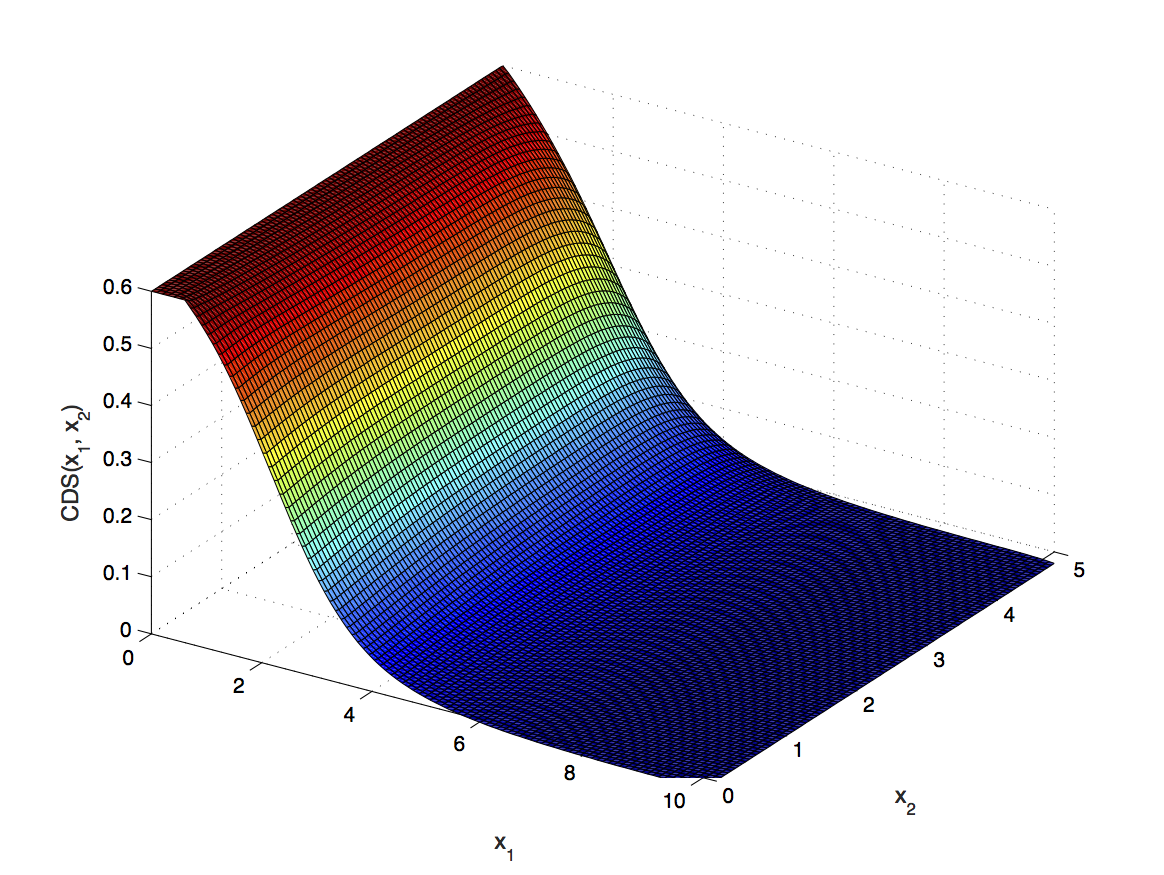}}
		\subfloat[]{\includegraphics[width=0.49\textwidth]{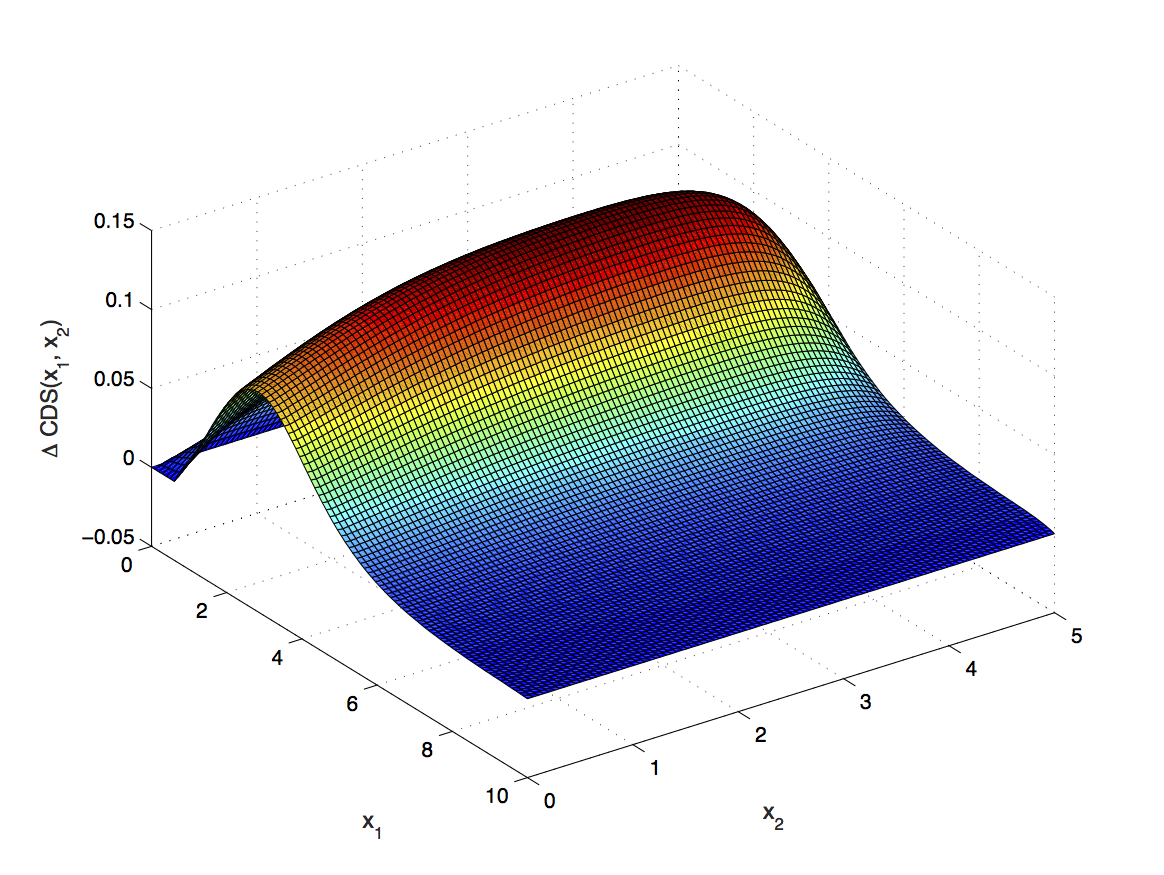}}\\
		\subfloat[]{\includegraphics[width=0.49\textwidth]{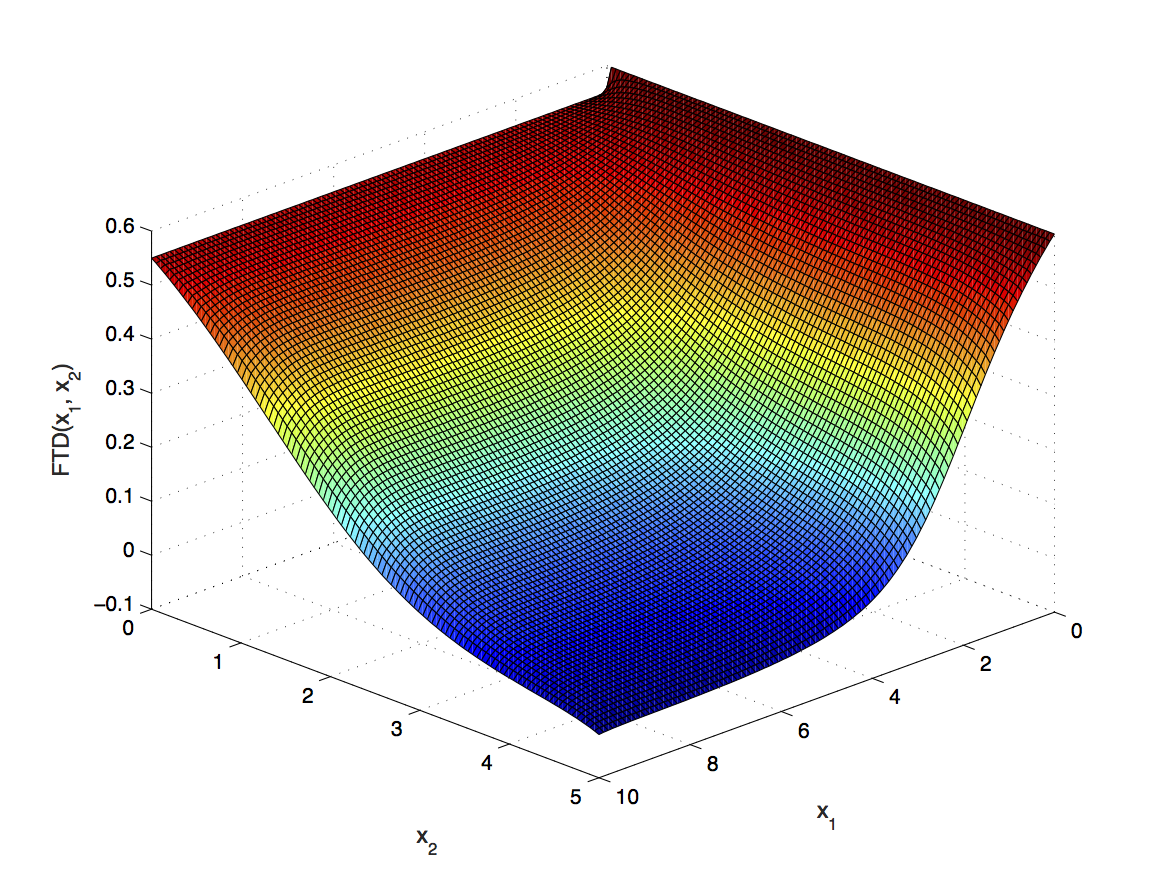}}
		\subfloat[]{\includegraphics[width=0.49\textwidth]{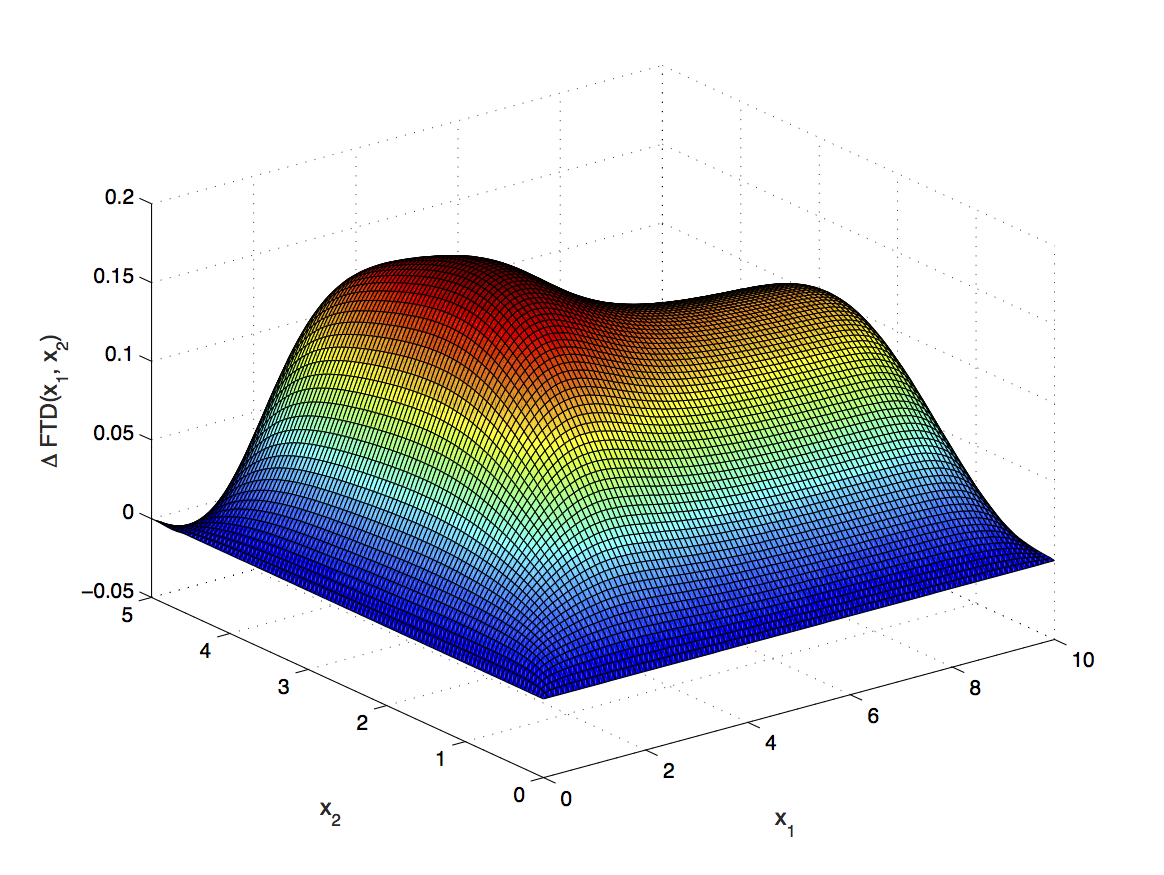}}\\
		\subfloat[]{\includegraphics[width=0.49\textwidth]{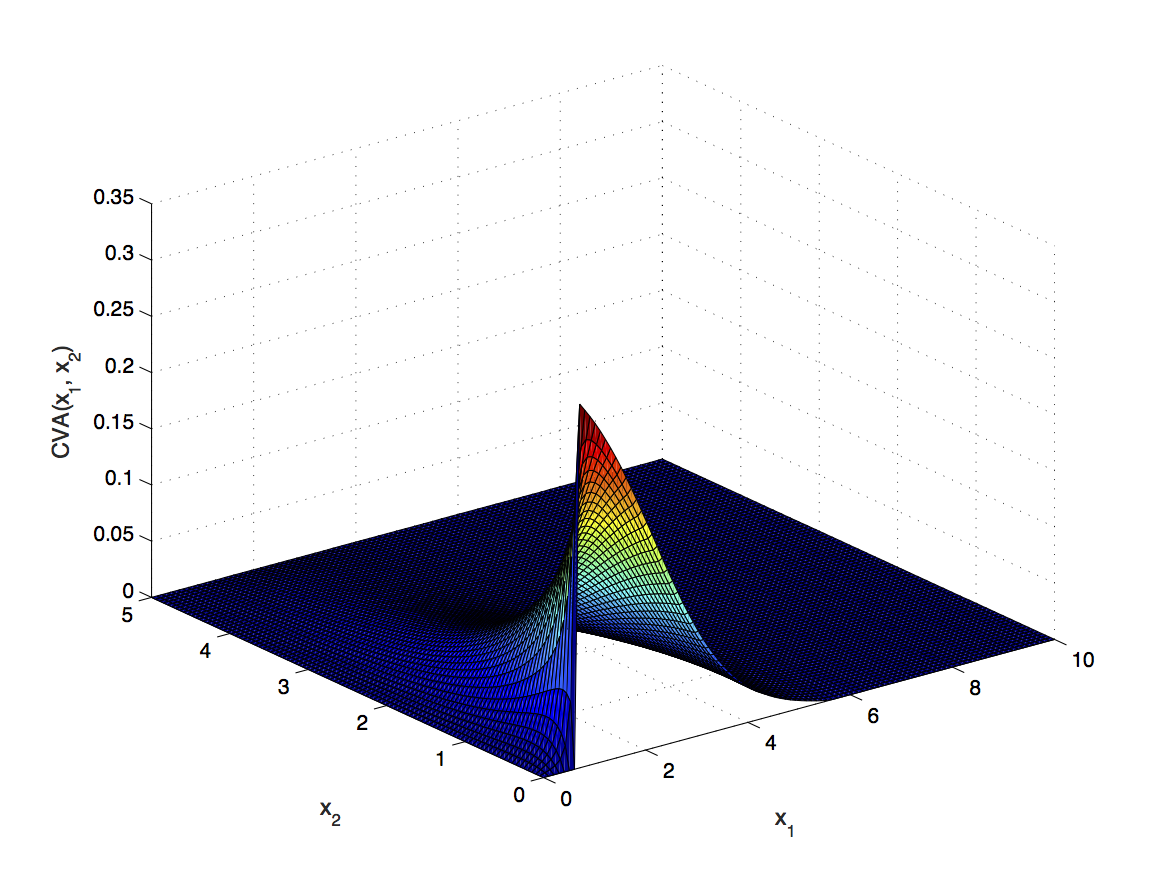}}
		\subfloat[]{\includegraphics[width=0.49\textwidth]{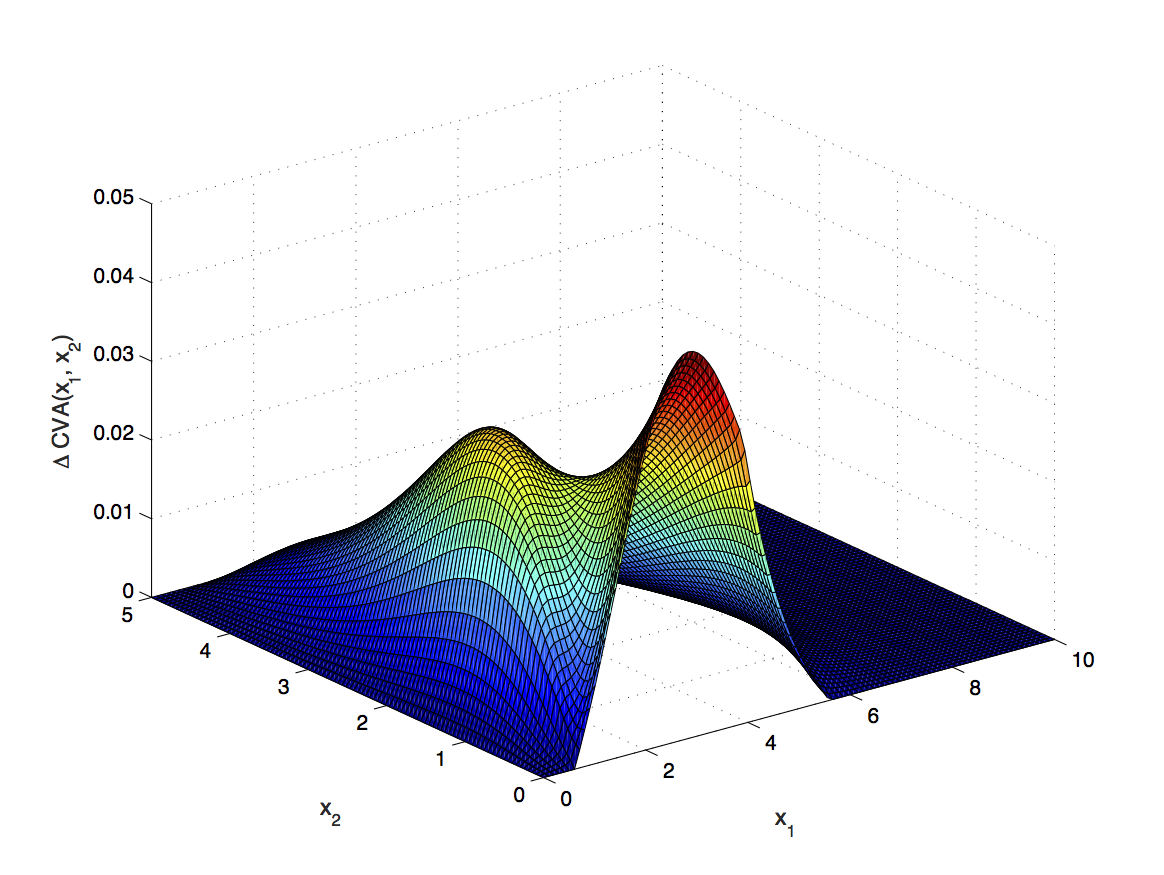}}
	\end{center}
	\vspace{-20pt}
	\caption{
	Values of different credit products with left the value and right the difference between model with and without jumps.
	Top row: Credit Default Swap spread, written on the first bank.
	Middle row: First-to-Default spread.
	Bottom row: CVA of CDS, where the first bank is Reference name (RN) and the second bank is Protection Seller (PS).
	}
 	\label{CDSPrice1}
	\label{FTDPrice1}
	\label{CVA1}
\end{figure}

%
%


\section{Calibration}

In this section we present calibration results of the model. There are eight unknown parameters, see \eqref{kolmogorov_backward}--\eqref{j12_eq}: $\sigma_1, \sigma_2, \rho, \varsigma_1, \varsigma_2, \lambda_1, \lambda_2, \lambda_{12}$. We use CDS and equity put option prices (with different strikes) as market data. If FTD contracts are available, one can use them to estimate $\rho$ and $\lambda_{12}$. Otherwise, historical estimation with share prices time series can be used. 

The data for external liabilities can be found in banks' balance sheets, which are publicly available. Usually, mutual liabilities data are not public information, thus we made an assumption that they are a fixed proportion of the total liabilities, which coincides with \cite{DavidLehar}. In particular, we fix the mutual liabilities as 5\% of total liabilities.

The asset's value is the sum of liabilities and equity price.

We choose Unicredit Bank as the first bank and Santander as the second bank. In Table \ref{data_table} we provide their equity price $E_i$, assets $A_i$ and liabilities $L_i$. As in \cite{LiptonSepp}, the liabilities are computed as a ratio of total liabilities and shares outstanding.

\begin{table}[H]
	\begin{center}
		\begin{tabular}{| c | c | c | c | c | c |}
			\hline
			$E_1(0)$ & $L_1(0)$ & $A_1(0)$ & $E_2(0)$ & $L_2(0)$ & $A_2(0)$  \\ 
			\hline
			6.02&  137.70& 143.72& 6.23 & 86.41 & 92.64 \\
			\hline
		\end{tabular}
		\caption{Assets and liabilities on 30/06/2015 (Bloomberg).}
		\label{data_table}	
	\end{center}
\end{table}
For the calibration we choose 1-year at-the-money, in-the-money, and out-of-the-money equity put options on the banks, and 1-year CDS contracts. Since the spreads of CDS are usually significantly lower than the option prices, we scale them by some weight $w_i$ in the objective function. As a result, we have the following 6-dimensional minimization problem:
\begin{multline}
	\label{calibration_eq}
	\min_{\theta} \{ w_1 (V^{CDS}_1(\theta) - \bar{V}^{CDS}_1)^2 + \sum \limits_{i = 1}^3 (V^{opt}_1(K_{i, 1}, \theta) - \bar{V}^{opt}_1(K_{i, 1}))^2 + \\
	+ w_2 (V^{CDS}_2(\theta) - \bar{V}^{CDS}_2)^2  + \sum \limits_{i = 1}^3 (V^{opt}_2(K_{i, 2}, \theta) - \bar{V}^{opt}_2(K_{i, 2}))^2  \},
\end{multline}
 where $\theta = (\sigma_1, \sigma_2, \lambda_1, \lambda_2, \varsigma_1, \varsigma_2)$, $V^{CDS}_i(\theta)$ is the model CDS spread on the $i$-th bank and $\bar{V}^{CDS}_i$ is the market CDS spread on the $i$-th bank, $V^{opt}_1(K, \theta)$ is the model price of the equity put option on the $i$-th bank with the strike $K$ and $\bar{V}^{opt}_i(K)$ is the market price of the equity put option on the $i$-th bank with strike $K$. Strikes $K_{1, j}, K_{2, j}$, and $K_{3, j}$ are chosen in such a way to take into account the smile. In particular, we choose $K_{1, j} = 1.1 E_j, K_{2, j} = E_j, K_{3, j} = 0.9 E_j$.

In order to find the global minimum of \eqref{calibration_eq} by a Newton-type method, we need to find a good starting point, otherwise an optmization procedure might finish in local minima which are not global minima. To choose the starting point, we calibrate one-dimensional models for each bank without mutual liabilities
\begin{multline}
	\label{calibration_eq1d}
	\min_{\theta_j} \{ w_j (V^{CDS}_j(\theta_j) - \bar{V}^{CDS}_j)^2 + (V^{opt}_j(K_{1, j}, \theta_j) - \bar{V}^{opt}_j(K_{1, j}))^2 + \\
	+(V^{opt}_j(K_{2, j}, \theta_j) - \bar{V}^{opt}_i(K_{2, j}))^2  + (V^{opt}_j(K_{3, j}, \theta_j) - \bar{V}^{opt}_j(K_{3, j}))^2  \},
\end{multline}
where $\theta_j = (\sigma_j, \lambda_j, \varsigma_j)$ for $j = 1, 2$.

The global minima of \eqref{calibration_eq1d} can be found via the {\bf{chebfun toolbox}} (\cite{Trefethen}) that uses Chebyshev polynomials to approximate the function, and then the global minima can be easily found. The calibration results of the one-dimensional model for the first and the second banks are presented in Table \ref{table:params_1d}. We note that the global minima of \eqref{calibration_eq} cannot be found via the chebfun toolbox, since it works with functions up to three variables. There are also more fundamental complexity issues for higher-dimensional tensor product interpolation.

\begin{table}[H]
	\begin{center}
		\begin{tabular}{| c | c | c | c | c | c |}
			\hline
			$\sigma_1$ & $\lambda_1$ & $\varsigma_1$ & $\sigma_2$ & $\lambda_2$ & $\varsigma_2$  \\ 
			\hline
			 0.0117&  0.1001& 0.3661& 0.0154 & 0.0160 & 0.0545\\
			\hline
		\end{tabular}
		\caption{Calibrated parameters of one-dimensional models on 30/06/2015 for $T = 1$.}
		\label{table:params_1d}	
	\end{center}
\end{table}

Similar to \cite{LiptonSepp}, for simplicity, we further assume that 
\begin{equation}
	\lambda_{\{12\}} = \rho \cdot \min(\lambda_1, \lambda_2).
	\label{lambda_assumption}
\end{equation}
 Then, we estimate $\rho$ from historical data. We take one year daily  equity prices $E_i(t)$ by time series (from Bloomberg) and estimate the covariance of asset returns $r_t^i = \frac{\Delta A_i(t)}{A_i(t)}$
\begin{equation}
	\widehat{\cov}(A_1, A_2) = \sum \limits_{i = 1}^n \left(r_{i, 1} - \bar{r_1} \right)\left(r_{i, 2} - \bar{r}_2  \right),
	\label{cov_est}
\end{equation}
where $\bar{r}_1$ and $\bar{r}_2$ are the sample mean of asset returns.

Using \eqref{assets_dynamics}, we can see that \eqref{cov_est} converges to
\begin{equation}
	\widehat{\cov}(A_1, A_2) \underset{n \to +\infty}{\longrightarrow} \sigma_1 \sigma_2 \left( \rho+ \lambda_{\{12\}} /(\varsigma_1 \varsigma_2) \right).
\end{equation}
Using the last equation and \eqref{lambda_assumption}, we can extract the estimated values of $\rho$ and $\lambda_{\{12\}}$. The estimation results are in Table \ref{table:corr_params}.
\begin{table}[H]
	\begin{center}
		\begin{tabular}{| c | c | c | }
			\hline
			& $\rho$ & $\lambda_{\{12\}} $ \\
			\hline
			Estimated value & 0.510 & 0.0188 \\
			\hline
			Confidence interval \footnotemark & (0.500, 0.526)& (0.0182, 0.0194) \\
			\hline
		\end{tabular}
		\caption{Historically estimated correlation coefficients on 30/06/2015 with 1 year window.}
		\label{table:corr_params}	
	\end{center}
\end{table}
\footnotetext{We use a $3 \sigma$ confidence interval.}

Finally, we perform a six-dimensional (constrained) optimization of \eqref{calibration_eq} with the starting point from Table \ref{table:params_1d} and correlation parameters from Table \ref{table:corr_params}. We choose different alternatives of mutual liabilities to have a clear picture how mutual liabilities influence on model parameters. We use the {\bf lsqnonlin} method in Matlab that uses a Trust Region Reflective algorithm \cite{conn2000trust} (with the gradient computed numerically). The model CDS spreads are computed using the method in Section \ref{CDS_pricing}, while equity option prices are computed in the usual finite-difference manner (see \cite{LiptonSepp} for details).  Results are presented in Table \ref{table:params_2d}.

\begin{table}[H]
	\begin{center}
		\begin{tabular}{|c | c | c | c | c | c | c | c |}
			\hline
			 Model & $\sigma_1$ & $\lambda_1$ & $\varsigma_1$ & $\sigma_2$ & $\lambda_2$ & $\varsigma_2$  \\ 
			\hline
			With jumps & 0.0122&  0.0950& 0.3958& 0.0160 & 0.0148 & 0.0505 \\
			Without jumps & 0.0206 & -- & -- & 0.0317 & -- & -- \\
			\hline
		\end{tabular}
		\caption{Calibrated parameters of two-dimensional model with mutual liabilities on 30/06/2015 for $T = 1$.}
		\label{table:params_2d}	
	\end{center}
\end{table}

In Table \ref{table:results} we present joint and marginal survival probabilities computed using the equations from Section \ref{section:joint}. From these results, we can conclude that jumps play an important role in the model.
\begin{table}[H]
	\begin{center}
		\begin{tabular}{|c | c | c | c | c | c | c | c |}
			\hline
			Model &Joint s/p & Marginal s/p   \\ 
			\hline
			With jumps & 0.9328 & 0.9666 \\
			Without jumps & 0.9717 & 0.9801 \\
			\hline
		\end{tabular}
		\caption{Joint and marginal survival probabilities for the calibrated models.}
		\label{table:results}	
	\end{center}
\end{table}


%
%
%
%
%
%

%
%
%
%

\section{Conclusion}
In this paper we considered a structural default model of interlinkage in the banking system. In particular, we studied a simplified setting of two banks numerically. This paper contains several new results. First, we developed a finite-difference method,
an extension of the Hundsdorfer-Verwer scheme, for the resulting partial integro-differential equation (PIDE), studied its stability and consistency. To deal with the integral component, we used the idea of its iterative computation from \cite{LiptonSepp}. The method gives second order convergence in both time and space variables and is unconditionally stable.

Second, by applying the finite-difference method, we computed various model characteristics, such as joint and marginal survival probabilities, CDS and FTD spreads, as well as CVA and DVA, and estimated the impact of jumps on the results. For a more sophisticated analysis, we calibrated the model to the market, and demonstrated a sizeable impact of jumps on joint and marginal survival probabilities in the case of two banks.
The development of numerical methods which are feasible for larger systems of banks appears to be an important future research direction.

From a numerical analysis perspective, we have extended the stability analysis of \cite{intHoutStability} to include an integral term arising from a jump-diffusion process with one-sided exponential jump size distribution.
By Fourier analysis, we were able to show that the scheme is stable in the $l_2$-sense when considering probability densities on an infinite domain. An interesting open question is the stability analysis in the presence of absorbing boundary conditions, such that the individual matrices involved in the splitting do not commute and the eigenvectors and eigenvalues of the combined operator cannot directly be computed.
We are planning to address this in future research.
\newpage
\appendix
\section{Pricing equations}
\subsection{Credit default swap}
\label{CDS_pricing}
A credit default swap (CDS) is a contract designed to exchange credit risk of a Reference Name (RN) between a Protection Buyer (PB) and a Protection Seller (PS). PB makes periodic coupon payments to PS conditional on no default of RN, up to the nearest payment date, in the exchange for receiving from PS the loss given RN's default.

Consider a CDS contract written on the first bank (RN), denote its price $C_1(t, x)$.\footnote{For the CDS contracts written on the second bank, the similar expression could be provided by analogy.} We assume that the coupon is paid continuously and equals to $c$. Then, the value of a standard CDS contract can be given (\cite{BieleckiRutkowski}) by the solution of  (\ref{kolm_1})--(\ref{kolm_2})  with $\chi(t, x) = c$ and terminal condition
\begin{equation*}
	\psi(x) = 
	\begin{cases}
		1 - \min(R_1, \tilde{R}_1(1)), \quad (x_1, x_2) \in D_2, \\
		1 - \min(R_1, \tilde{R}_1(\omega_2)), \quad (x_1, x_2) \in D_{12}, \\		
	\end{cases}
\end{equation*}
where $\omega_2 = \omega_2(x)$ is defined in (\ref{term_cond}) and 
\begin{equation*}
	\tilde{R}_1(\omega_2) = \min \left[1, \frac{A_1(T) +  \omega_2 L_{2 1}(T)}{L_1(T) + \omega_2 L_{12}(T)}\right].
\end{equation*}
Thus, the pricing problem for CDS contract on the first bank is
\begin{equation}
\begin{aligned}
		& \frac{\partial}{\partial t} C_1(t, x) + \mathcal{L} C_1(t, x) = c, \\
		& C_1(t, 0, x_2) = 1 - R_1, \quad C_1(t, \infty, x_2) = -c(T-t), \\
		& C_1(t, x_1, 0) = \Xi(t, x_1) = 
		\begin{cases}
			c_{1,0}(t, x_1), & x_1 \ge \tilde{\mu}_1, \\
			1-R_1, & x_1 < \tilde{\mu}_i,
		\end{cases} \quad C_1(t, x_1, \infty) = c_{1,\infty}(t, x_1),\\
		& C_1(T, x) = \psi(x) = 
	\begin{cases}
		1 - \min(R_1, \tilde{R}_1(1)), \quad (x_1, x_2) \in D_2, \\
		1 - \min(R_1, \tilde{R}_1(\omega_2)), \quad (x_1, x_2) \in D_{12}, \\		
	\end{cases}
\end{aligned}
\end{equation}
where $c_{1,0}(t, x_1)$ is the solution of the following boundary value problem:
\begin{equation}
\begin{aligned}
		& \frac{\partial}{\partial t} c_{1, 0}(t, x_1) + \mathcal{L}_1 c_{1, 0}(t, x_1) = c, \\
		& c_{1, 0}(t, \tilde{\mu}_1^{<}) = 1 - R_1, \quad c_{1, 0}(t, \infty) = -c(T-t), \\
		& c_{1, 0}(T, x_1) = (1 - R_1) \mathbbm{1}_{\{\tilde{\mu}_1^{<} \le x_1 \le \tilde{\mu}_1^{=}\}}, 
\end{aligned}
\end{equation}
and $c_{1,\infty}(t, x_1)$ is the solution of the following boundary value problem
\begin{equation}
\begin{aligned}
		& \frac{\partial}{\partial t} c_{1, \infty}(t, x_1) + \mathcal{L}_1 c_{1, \infty}(t, x_1) = c, \\
		& c_{1, \infty}(t, 0) = 1 - R_1, \quad c_{1, \infty}(t, \infty) = -c(T-t), \\
		& c_{1, \infty}(T, x_1) = (1 - R_1) \mathbbm{1}_{\{x_1 \le \mu_1^{=}\}}.
\end{aligned}
\end{equation}

\subsection{First-to-default swap}
An FTD contract refers to a basket of reference names (RN). Similar to a regular CDS, the Protection Buyer (PB) pays a regular coupon payment $c$ to the Protection Seller (PS) up to the first default of any of the RN in the basket or maturity time $T$. In return, PS compensates PB the loss caused by the first default.

Consider the FTD contract referenced on $2$ banks, and denote its price $F(t, x)$. We assume that the coupon is paid continuously and equals to $c$. Then, the value of FTD contract can be given (\cite{LiptonItkin2015}) by the solution of  (\ref{kolm_1})--(\ref{kolm_2})  with $\chi(t, x) = c$ and terminal condition
\begin{equation*}
	\psi(x) = \beta_0  \mathbbm{1}_{\{x \in D_{12}\}} + \beta_1 \mathbbm{1}_{\{x \in D_{1}\}} + \beta_2 \mathbbm{1}_{\{x \in D_{2}\}},
\end{equation*}
where
\begin{equation*}
	\begin{aligned}
		\beta_0 = 1 - \min[\min(R_1, \tilde{R}_1(\omega_2), \min(R_2, \tilde{R}_2(\omega_1)], \\
		\beta_1 = 1 - \min(R_2, \tilde{R}_2(1)), \quad \beta_2 = 1 - \min(R_1, \tilde{R}_1(1)),
	\end{aligned}
\end{equation*}
and
\begin{equation*}
	\tilde{R}_1(\omega_2) = \min \left[1, \frac{A_1(T) +  \omega_2 L_{2 1}(T)}{L_1(T) + \omega_2 L_{12}(T)}\right], \quad \tilde{R}_2(\omega_1) = \min \left[1, \frac{A_2(T) +  \omega_1 L_{1 2}(T)}{L_2(T) + \omega_1 L_{21}(T)}\right].
\end{equation*}
with $\omega_1 = \omega_1(x)$ and $\omega_2 = \omega_2(x)$ defined in (\ref{term_cond}).

Thus, the pricing problem for a FTD contract is
\begin{equation}
\begin{aligned}
		& \frac{\partial}{\partial t} F(t, x) + \mathcal{L} F(t, x) = c, \\
		& F(t, x_1, 0) = 1 - R_2,  \quad F(t, 0, x_2) = 1 - R_1, \\
		& F(t, x_1, \infty) = f_{2,\infty}(t, x_1), \quad F(t, \infty, x_2) = f_{1,\infty}(t, x_2), \\
		& F(T, x) = \beta_0  \mathbbm{1}_{\{x \in D_{12}\}} + \beta_1 \mathbbm{1}_{\{x \in D_{1}\}} + \beta_2 \mathbbm{1}_{\{x \in D_{2}\}},
\end{aligned}
\end{equation}
where $f_{1,\infty}(t, x_1)$ and $f_{2,\infty}(t, x_2)$ are the solutions of the following boundary value problems
\begin{equation}
\begin{aligned}
		& \frac{\partial}{\partial t} f_{i, \infty}(t, x_i) + \mathcal{L}_i f_{i, \infty}(t, x_i) = c, \\
		& f_{i, \infty}(t, 0) = 1 - R_i, \quad f_{i, \infty}(t, \infty) = -c(T-t), \\
		& f_{1, \infty}(T, x_i) = (1 - R_i) \mathbbm{1}_{\{x_i \le \mu_i^{=}\}}.
\end{aligned}
\end{equation}

\subsection{Credit and Debt Value Adjustments for CDS}

Credit Value Adjustment and Debt Value Adjustment can be considered either unilateral or bilateral. For unilateral counterparty risk, we need to consider only two banks (RN, and PS for CVA and PB for DVA), and a two-dimensional problem can be formulated, while bilateral counterparty risk requires a three-dimensional problem, where Reference Name, Protection Buyer, and Protection Seller are all taken into account. We follow \cite{LiptonSav} for the pricing problem formulation but include jumps and mutual liabilities, which affects the boundary conditions.

\paragraph{Unilateral CVA and DVA}
The Credit Value Adjustment represents the additional price associated with the possibility of a counterparty's default. Then, CVA can be defined as
\begin{equation}
	V^{CVA} = (1- R_{PS}) \mathbb{E}[\mathbbm{1}_{\{\tau^{PS} < \min(T, \tau^{RN}) \}} (V_{\tau^{PS}}^{CDS})^{+} \, | \mathcal{F}_t],
\end{equation}
where $R_{PS}$ is the recovery rate of PS, $\tau^{PS}$ and $\tau^{RN}$ are the default times of PS and RN, and $V_t^{CDS}$ is the price of a CDS without counterparty credit risk.

We associate $x_1$ with the Protection Seller and $x_2$ with the Reference Name, then CVA can be given by the solution of  (\ref{kolm_1})--(\ref{kolm_2})  with $\chi(t, x) = 0$ and $\psi(x) = 0$. Thus,
\begin{equation}
\begin{aligned}
		& \frac{\partial}{\partial t} V^{CVA}+ \mathcal{L} V^{CVA} = 0, \\
		& V^{CVA}(t, 0, x_2) = (1 - R_{PS}) V^{CDS}(t, x_2)^{+}, \quad V^{CVA}(t, x_1, 0) = 0, \\
		& V^{CVA}(T, x_1, x_2) = 0.
\end{aligned}
\end{equation}

Similar, Debt Value Adjustment represents the additional price associated with the default and defined as
\begin{equation}
	V^{DVA} = (1- R_{PB}) \mathbb{E}[\mathbbm{1}_{\{\tau^{PB} < \min(T, \tau^{RN}) \}} (V_{\tau^{PB}}^{CDS})^{-} \, | \mathcal{F}_t],
\end{equation}
where $R_{PB}$ and $\tau^{PB}$ are the recovery rate and default time of the protection buyer.

Here, we associate $x_1$ with the Protection Buyer and $x_2$ with the Reference Name, then, similar to CVA,  DVA can be given by the solution of  (\ref{kolm_1})--(\ref{kolm_2}),
\begin{equation}
\begin{aligned}
		& \frac{\partial}{\partial t} V^{DVA}+ \mathcal{L} V^{DVA} = 0, \\
		& V^{DVA}(t, 0, x_2) = (1 - R_{PB}) V^{CDS}(t, x_2)^{-}, \quad V^{DVA}(t, x_1, 0) = 0, \\
		& V^{DVA}(T, x_1, x_2) = 0.
\end{aligned}
\end{equation}

\paragraph{Bilateral CVA and DVA}

When we defined unilateral CVA and DVA, we assumed that either protection  buyer, or protection seller are risk-free. Here we assume that they are both risky. Then, 
The Credit Value Adjustment represents the additional price associated with the possibility of counterparty's default and defined as
\begin{equation}
	V^{CVA} = (1 - R_{PS}) \mathbb{E}[\mathbbm{1}_{\{\tau^{PS} < \min(\tau^{PB}, \tau^{RN}, T)\}} (V^{CDS}_{\tau^{PS}})^{+} \, | \mathcal{F}_t],
\end{equation} 

Similar, for DVA
\begin{equation}
	V^{DVA} = (1 - R_{PB}) \mathbb{E}[\mathbbm{1}_{\{\tau^{PB} < \min(\tau^{PS}, \tau^{RN}, T)\}} (V^{CDS}_{\tau^{PB}})^{-} \, | \mathcal{F}_t],
\end{equation}

We associate $x_1$ with protection seller, $x_2$ with protection buyer, and $x_3$ with reference name. Here, we have a three-dimensional process. Applying three-dimensional version of (\ref{kolm_1})--(\ref{kolm_2}) with $\psi(x) = 0, \chi(t, x) = 0$, we get
\begin{equation}
	\label{CVA_pde}
\begin{aligned}
		& \frac{\partial}{\partial t} V^{CVA} + \mathcal{L}_3 V^{CVA} = 0, \\
		& V^{CVA}(t, 0, x_2, x_3) = (1 - R_{PS}) V^{CDS}(t, x_3)^{+}, \\
		& V^{CVA}(t, x_1, 0, x_3 ) = 0, \quad V^{CVA}(t, x_1, x_2, 0)  = 0, \\
		& V^{CVA}(T, x_1, x_2, x_3) = 0,
\end{aligned}
\end{equation}
and
\begin{equation}
\label{DVA_pde}
\begin{aligned}
		& \frac{\partial}{\partial t} V^{DVA} + \mathcal{L}_3 V^{DVA} = 0, \\
		& V^{DVA}(t, 0, x_2, x_3) = (1 - R_{PB}) V^{CDS}(t, x_3)^{-}, \\
		& V^{DVA}(t, x_1, 0, x_3 ) = 0, \quad V^{DVA}(t, x_1, x_2, 0)  = 0, \\
		& V^{DVA}(T, x_1, x_2, x_3) = 0,
\end{aligned}
\end{equation}
where $\mathcal{L}_3 f$ is the three-dimensional infinitesimal generator.

\bibliographystyle{apalike}
\bibliography{lit_bib}

\end{document}